\newif\iflongversion
\newif\ifpredicationals
\theoremstyle{plain}
\newtheorem{theorem}{Theorem}
\newtheorem{lemma}[theorem]{Lemma}
\newtheorem{corollary}[theorem]{Corollary}
\theoremstyle{definition}
\newtheorem{definition}{Definition}
\theoremstyle{remark}
\newtheorem{remark}{Remark}
\definecolor{vgreen}{rgb}{.1,.5,0}
\definecolor{vred}{rgb}{.7,0,0}
\definecolor{vblue}{rgb}{.1,.15,.62}
\tikzstyle{transition}=[-stealth]
\tikzstyle{similar state}=[double]
\tikzstyle{transition exists}=[dashed]
\tikzstyle{equivalence}=[<->,double]
\tikzstyle{strategy preimage}=[draw=vblue,fill=vblue!20,shape=ellipse]
\tikzstyle{winning condition}=[draw=vred,fill=vred!50,text=white,shape=circle]
\tikzstyle{winning annotation}=[ultra thick,black,->]
\def\qedhere{\qed}
\def\qedhere{}
\newcommand{\rref}[2][]{\prettyref{#2}}
\renewcommand*{\applyusubst}[2]{#1{#2}}%
\newcommand*{\usubstgroup}[1]{(#1)}
\renewcommand{\with}{\mathrel{:}}
\newcommand*{\ignore}[1]{}
\newcommand*{\linterpretationsconst}[2]{\mathcal{I}}
\newcommand{\I}{\vdLint[const=I,state=\omega]}
\newcommand{\It}{\vdLint[const=I,state=\nu]}
\newcommand{\Iz}{\vdLint[const=I,state=\mu]}
\newcommand{\If}{\DALint[const=I,flow=\varphi]}
\newcommand*{\Iff}[1][\zeta]{\vdLint[const=I,state=\varphi(#1)]}%
\newcommand{\Idot}{\vdLint[const=I,state=]}
\newcommand{\Ia}{\iadjointSubst{\sigma}{\I}}%
\newcommand{\Ita}{\iadjointSubst{\sigma}{\It}}%
\newcommand{\Iminner}{\imodif[const]{\I}{\,\usarg}{d}}%
\newcommand{\Iat}{\iconcat[state=\nu]{\Ia}}%
\newcommand{\Itar}{\iconcat[state=\omega]{\Ita}}%
\newcommand{\IJalt}{\vdLint[const=J,state=\tilde{\omega}]}%
\newcommand{\IJtalt}{\vdLint[const=J,state=\tilde{\nu}]}%
\newcommand{\IIalt}{\vdLint[const=I,state=\tilde{\omega}]}%
\newcommand{\IItalt}{\vdLint[const=I,state=\tilde{\nu}]}%
\let\Ialt\IJalt
\let\Italt\IJtalt
\newcommand{\Imidi}[1][i]{\vdLint[const=I,state=\omega_{#1}]}
\newcommand{\Iaz}{\iconcat[state=\mu]{\Ia}}%
\newcommand{\Itaz}{\iconcat[state=\mu]{\Ita}}%
\newcommand{\bebecomes}{\mathrel{::=}}
\newcommand{\alternative}{~|~}
\providecommand{\dfn}[2][]{\emph{#2}}
\newcommand*{\iwinreg}[3][]{\iaccess[#1]{#2}\big(#3\big)}
\newcommand*{\iwin}[3][]{\iget[state]{#2} \in \iwinreg[#1]{#2}{#3}}
\newcommand*{\inowin}[3][]{\iget[state]{#2} \not\in \iwinreg[#1]{#2}{#3}}
\newcommand*{\restrictto}[2]{#1{\uparrow}#2}
\newcommand*{\selectlike}[3]{#1{\downarrow} #2{\scriptstyle(#3)}}
\newcommand*{\iselectlike}[3]{\selectlike{#1}{\iget[state]{#2}}{#3}}
\newcommand*{\genDE}[1]{\theta}%
\newcommand{\ivr}{\psi}
  \renewcommand{\iadjointSubst}[2]{%
    \useinterpretation{#2}%
    \edef\tmpadjointconst{{#1}^*_{\Interpretation@state}{\Interpretation@const}}%
    \iconcat[const=\tmpadjointconst]{#2}
  }
\title{Uniform Substitution for Differential Game Logic}
\author{Andr\'e Platzer\thanks{
  Computer Science Department, Carnegie Mellon University, Pittsburgh, USA
  {aplatzer@cs.cmu.edu}
}}
\date{}
\begin{document}
\maketitle
\allowdisplaybreaks
\thispagestyle{empty}

\begin{abstract}
This paper presents a uniform substitution calculus for \emph{differential game logic} (\dGL).
Church's \emph{uniform substitutions} substitute a term or formula for a function or predicate symbol everywhere. 
After generalizing them to differential game logic and allowing for the substitution of hybrid games for game symbols, uniform substitutions make it possible to \emph{only} use axioms instead of axiom schemata, thereby substantially simplifying implementations.
Instead of subtle schema variables and soundness-critical side conditions on the occurrence patterns of logical variables to restrict infinitely many axiom schema instances to sound ones, the resulting axiomatization adopts only a finite number of ordinary \dGL formulas as axioms, which uniform substitutions instantiate soundly.
This paper proves soundness and completeness of uniform substitutions for the monotone modal logic \dGL.
The resulting axiomatization admits a straightforward modular implementation of \dGL in theorem provers.
\\[\medskipamount]
\textbf{Keywords:} {differential game logic, uniform substitution, axioms, static semantics}
\end{abstract}

\newsavebox{\Rval}%
\sbox{\Rval}{$\scriptstyle\mathbb{R}$}

\newsavebox{\USarg}%
\sbox{\USarg}{$\boldsymbol{\cdot}$}

\newsavebox{\UScarg}%
\sbox{\UScarg}{$\boldsymbol{\_}$}

\section{Introduction}

Church's \emph{uniform substitution} is a classical proof rule for first-order logic \cite[\S35/40]{Church_1956}.
Uniform substitutions uniformly instantiate function and predicate symbols with terms and formulas, respectively, as functions of their arguments.
If $\phi$ is valid, then so is any admissible instance \(\applyusubst{\sigma}{\phi}\) for any uniform substitution $\sigma$:
\[
      \cinferenceRule[US|US]{uniform substitution}
      {\linferenceRule[formula]
        {\phi}
        {\applyusubst{\sigma}{\phi}}
      }{}%
\]
Uniform substitution
\(\sigma = \usubstlist{\usubstmod{p(\usarg)}{x+\usarg^2\geq\usarg}}\), e.g.\
turns \(\phi \mnodefequiv (p(4\itimes y) {\limply} \lexists{y}{p(x^2{+}y)})\) into
\(\applyusubst{\sigma}{\phi} \mequiv (x+(4\itimes y)^2\geq4\itimes y \limply \lexists{y}{x+(x^2+y)^2\geq x^2+y})\).
The introduction of $x$ is sound, but introducing variable $y$ via \(\sigma = \usubstlist{\usubstmod{p(\usarg)}{y+\usarg^2\geq\usarg}}\) would not be.
The occurrence of the variable $y$ of the argument $x^2+y$ that was already present previously, however, can correctly continue to be used in the instantiation.

\emph{Differential game logic} (\dGL), which is the specification and verification logic for \emph{hybrid games} \cite{DBLP:journals/tocl/Platzer15}, originally adopted uniform substitution for predicates, because they streamline and simplify completeness proofs.
A subsequent investigation of uniform substitutions for differential \emph{dynamic} logic (\dL) for hybrid \emph{systems} \cite{DBLP:journals/jar/Platzer17} confirmed how impressively Church's original motivation for uniform substitutions manifests in significantly simplifying prover implementations.

Church developed uniform substitutions to relate the study of (object-level) axioms to that of (meta-level) axiom schemata (which stand for an infinite family of axioms).
Beyond their philosophical considerations, uniform substitutions significantly impact prover designs by eliminating the usual gap between a logic and its prover.
After implementing the recursive application of uniform substitutions, the soundness-critical part of a theorem prover reduces to providing a copy of each concrete logical formula that the logic adopts as axioms.
Uniform substitutions provide a modular interface to the static semantics of the logic, because they are the only soundness-critical part of the prover that needs to know free or bound variables of an expression.
This simplicity is to be contrasted with the subtle soundness-critical side conditions that usually infest axiom schema and proof rule schema implementations, especially for the more involved binding structures of program logics.
The beneficial impact of uniform substitutions on provers made it possible to reduce the size of the soundness-critical core of the differential dynamic logic prover \KeYmaeraX \cite{DBLP:conf/cade/FultonMQVP15} down to 2\% compared to the previous prover \KeYmaera \cite{DBLP:conf/cade/PlatzerQ08} and formally verify \dL in Isabelle and Coq \cite{DBLP:conf/cpp/BohrerRVVP17}.

This paper generalizes uniform substitution to the significantly more expressive differential game logic for hybrid \emph{games} \cite{DBLP:journals/tocl/Platzer15}.
The modular structure of the soundness argument for \dL is sufficiently robust to work for \dGL:
\begin{inparaenum}[\it i)]
\item prove correctness of the static semantics,
\item relate syntactic effect of uniform substitution to semantic effect of its adjoint interpretation,
\item conclude soundness of rule \irref{US}, and
\item separately establish soundness of each axiom.
\end{inparaenum}
The biggest challenge is that hybrid game semantics cannot use state reachability, so correctness notions and their uses for the static semantics need to be phrased as functions of winning condition projections.
The interaction of game operators with repetitions causes transfinite fixpoints instead of the arbitrary finite iterations in hybrid systems. 
Relative completeness follows from previous results, but exploits the new game symbols to simplify the proof.
After new soundness justifications, the resulting uniform substitution mechanism and axioms for \dGL end up close to those for hybrid systems \cite{DBLP:journals/jar/Platzer17} (apart from the ones that are unsound for hybrid games \cite{DBLP:journals/tocl/Platzer15}).
The modularity caused by uniform substitutions explains why it was possible to generalize the \KeYmaeraX prover kernel from hybrid systems to hybrid games with about 10 lines of code.%
\footnote{%
The addition %
of games to the previous \KeYmaera prover was more complex \cite{DBLP:conf/cade/QueselP12}, with an implementation effort measured in months not minutes.
Unfortunately, this is not quite comparable, because both provers implement markedly different flavors of games for hybrid systems.
The game logic for \KeYmaera \cite{DBLP:conf/cade/QueselP12} was specifically tuned as an exterior extension to be more easily implementable than \dGL in \KeYmaera.
}
All proofs are inline.

\section{Preliminaries: Differential Game Logic}
This section reviews differential game logic (\dGL), a specification and verification logic for hybrid games \cite{DBLP:journals/tocl/Platzer15,DBLP:journals/tocl/Platzer17}.
Hybrid games support the discrete, continuous, and adversarial dynamics of two-player games in hybrid systems between players Angel and Demon.
Compared to previous work \cite{DBLP:journals/tocl/Platzer15}, the logic is augmented to form \emph{(differential-form) differential game logic} with differentials and function symbols \cite{DBLP:journals/jar/Platzer17} and with game symbols $a$ that can be substituted with hybrid games.

\subsection{Syntax}
Differential game logic has three syntactic categories.
Its terms $\theta$ are polynomial terms, function symbols interpreted over $\reals$, and differential terms $\der{\theta}$.
Its hybrid games $\alpha$ describe the permitted player actions during the game in program notation.
Its formulas $\phi$ include first-order logic of real arithmetic and, for each hybrid game $\alpha$, a modal formula \(\ddiamond{\alpha}{\phi}\), which expresses that player Angel has a winning strategy in the hybrid game $\alpha$ to reach the region satisfying \dGL formula $\phi$.
In the formula \(\ddiamond{\alpha}{\phi}\), the \dGL formula $\phi$ describes Angel's objective while the hybrid game $\alpha$ describes the moves permitted for the two players, respectively.

The set of all \emph{variables} is $\allvars$.
Variables of the form $\D{x}$ for a variable $x\in\allvars$ are called \emph{differential variables}, which are just independent variables associated to variable $x$. 
For any subset $V\subseteq\allvars$ is \(\D{V}\mdefeq\{\D{x} : x\in V\}\) the set of \emph{differential variables} $\D{x}$ for the variables in $V$.
The set of all variables is assumed to contain all its differential variables $\D{\allvars}\subseteq\allvars$ (although $\D[2]{x},\D[3]{x}$ are not usually used).

\begin{definition}[Terms]
\emph{Terms} are defined by this grammar
(with $\theta,\eta,\theta_1,\dots,\theta_k$ as terms, $x\in\allvars$ as variable, and $f$ as function symbol of arity $k$):
\[
  \theta,\eta ~\bebecomes~
  x
  \alternative
  f(\theta_1,\dots,\theta_k)
  \alternative
  \theta+\eta
  \alternative
  \theta\cdot\eta
  \alternative \der{\theta}
\]
\end{definition}

As in \dL \cite{DBLP:journals/jar/Platzer17}, \emph{differentials} \(\der{\theta}\) of terms $\theta$ are exploited for the purpose of axiomatically internalizing reasoning about differential equations.
The differential \(\der{\theta}\) describes how the value of $\theta$ changes locally depending on how the values of its variables $x$ change, i.e., as a function of the values of the corresponding differential variables $\D{x}$.
Differentials reduce reasoning about \emph{differential equations} to reasoning about \emph{equations of differentials} \cite{DBLP:journals/jar/Platzer17} with their single-state semantics.

\begin{definition}[Hybrid games] \label{def:dGL-HG}
The \emph{hybrid games of differential game logic {\dGL}} are defined by the following grammar (with $\alpha,\beta$ as hybrid games, $a$ as game symbol, $x$ as variable, $\theta$ as term, and $\ivr$ as \dGL formula):
\[
  \alpha,\beta ~\bebecomes~
  a\alternative
  \pupdate{\pumod{x}{\theta}}
  \alternative
  \pevolvein{\D{x}=\genDE{x}}{\ivr}
  \alternative
  \ptest{\ivr}
  \alternative
  \alpha\cup\beta
  \alternative
  \alpha;\beta
  \alternative
  \prepeat{\alpha}
  \alternative
  \pdual{\alpha}
\]
\end{definition}
Atomic games are the following.
\emph{Game symbols} $a$ are uninterpreted.
The \emph{discrete assignment game} \(\pupdate{\pumod{x}{\theta}}\) evaluates term $\theta$ and assigns it to variable $x$.
The \emph{continuous evolution game} \(\pevolvein{\D{x}=\genDE{x}}{\ivr}\) allows Angel to follow  differential equation \m{\D{x}=\genDE{x}} for any real duration during which the evolution domain constraint $\ivr$ is true
(\(\pevolve{\D{x}=\genDE{x}}\) stands for \(\pevolvein{\D{x}=\genDE{x}}{\ltrue}\)).
If $\ivr$ is not true in the current state, then no solution exists and Angel loses the game.
\emph{Test game} \(\ptest{\ivr}\) has no effect except that Angel loses the game prematurely unless $\ivr$ is true in the current state.

Compound games are the following.
The \emph{game of choice} \(\pchoice{\alpha}{\beta}\) allows Angel to choose whether she wants to play game $\alpha$ or, instead, play game $\beta$.
The \emph{sequential game} \(\alpha;\beta\) first plays $\alpha$ and then plays $\beta$ (unless a player lost prematurely during $\alpha$).
The \emph{repeated game} \(\prepeat{\alpha}\) allows Angel to decide how often to repeat game $\alpha$ by inspecting the state reached after the respective $\alpha$ game to decide whether she wants to play another round.
The \emph{dual game} \(\pdual{\alpha}\) makes the players switch sides: all of Angel's decisions are now Demon's and all of Demon's decisions are now Angel's.
Where Angel would have lost prematurely in $\alpha$ (for failing a test or evolution domain) now Demon does in $\pdual{\alpha}$, and vice versa.
This makes game play interactive but semantically quite rich \cite{DBLP:journals/tocl/Platzer15}.
All other operations are definable, e.g., the game where Demon chooses between $\alpha$ and $\beta$ as \(\pdual{(\pchoice{\pdual{\alpha}}{\pdual{\beta}})}\).

\begin{definition}[\dGL formulas] \label{def:dGL-formula}
The \emph{formulas of differential game logic {\dGL}} are defined by the following grammar (with $\phi,\psi$ as \dGL formulas, $p$ as predicate symbol of arity $k$, $\theta,\eta,\theta_i$ as terms, $x$ as variable, and $\alpha$ as hybrid game):
  \[
  \phi,\psi ~\bebecomes~
  \theta\geq\eta \alternative
  p(\theta_1,\dots,\theta_k) \alternative
  \lnot \phi \alternative
  \phi \land \psi \alternative
  \lexists{x}{\phi} \alternative 
  \ddiamond{\alpha}{\phi}
  \]
\end{definition}

\newcommand{\sndvel}{x{+}1}%

\noindent
The box modality \(\dbox{\alpha}{}\) in formula \(\dbox{\alpha}{\phi}\) describes that the player Demon has a winning strategy to achieve $\phi$ in hybrid game $\alpha$.
But \dGL satisfies the determinacy duality
\(\dbox{\alpha}{\phi} \lbisubjunct \lnot\ddiamond{\alpha}{\lnot\phi}\) \cite[Theorem 3.1]{DBLP:journals/tocl/Platzer15}, which we now take as its definition to simplify matters.
Other operators are definable as usual, e.g., \(\lforall{x}{\phi}\) as \(\lnot\lexists{x}{\lnot\phi}\).
{\renewcommand{\sndvel}{x^2{+}1}%
The following \dGL formula, for example, expresses that Angel has a winning strategy to follow the differential equation \(\pevolve{\D{x}=v}\) to a state where $x>0$ even after Demon chooses \(\pupdate{\pumod{v}{2}}\) or \(\pupdate{\pumod{v}{\sndvel}}\) first:
\(\ddiamond{\pdual{(\pchoice{\pupdate{\pumod{v}{2}}}{\pupdate{\pumod{v}{\sndvel}}})}; \pevolve{\D{x}=v}}{\,x>0}\).
}%

\subsection{Semantics}

While the syntax of \dGL is close to that of \dL (with the only change being the addition of the duality operator $\pdual{}$), its semantics is significantly more involved, because it needs to recursively support \emph{interactive} game play, instead of mere reachability.
Variables may have different values in different states of the game.
A \dfn{state} \(\iget[state]{\I}\) is a mapping from the set of all variables $\allvars$ to the reals $\reals$. 
Also, \(\iget[state]{\imodif[state]{\I}{x}{r}}\) is the state that agrees with state \(\iget[state]{\I}\) except for variable $x$ whose value is $r\in\reals$.
The set of all states is denoted $\linterpretations{\Sigma}{\allvars}$.
The set of all subsets of $\linterpretations{\Sigma}{\allvars}$ is denoted \(\powerset{\linterpretations{\Sigma}{\allvars}}\).

The semantics of function, predicate, and game symbols is independent from the state.
They are interpreted by an \dfn{interpretation} $\iget[const]{\I}$ that maps
each arity $k$ function symbol $f$ to a $k$-ary smooth function \(\iget[const]{\I}(f) : \reals^k\to\reals\),
and each arity $k$ predicate symbol $p$ to a $k$-ary relation \(\iget[const]{\I}(p) \subseteq \reals^k\).
The semantics of differential game logic in interpretation $\iget[const]{\I}$ defines, for each formula $\phi$, the set of all states \(\imodel{\I}{\phi}\), in which $\phi$ is true.
Since hybrid games appear in \dGL formulas and vice versa, the semantics \(\iwinreg[\alpha]{\I}{X}\) of hybrid game $\alpha$ in interpretation $\iget[const]{\I}$ is defined by simultaneous induction (\rref{def:HG-semantics}) as the set of all states from which Angel has a winning strategy in hybrid game $\alpha$ to achieve $X$.
The real value of term $\theta$ in state $\iget[state]{\I}$ for interpretation $\iget[const]{\I}$ is denoted \(\ivaluation{\I}{\theta}\) and defined as usual.\footnote{%
Even if not critical here, differentials have a differential-form semantics \cite{DBLP:journals/jar/Platzer17} as the sum of all partial derivatives by $x\in\allvars$ multiplied by the corresponding values of $\D{x}$:\\
\(
\ivaluation{\I}{\der{\theta}}
=
\sum_{x\in\allvars} \iget[state]{\I}(\D{x}) \Dp[x]{\ivaluation{\Idot}{\theta}}(\iget[state]{\I})
= \sum_{x\in\allvars} \iget[state]{\I}(\D{x}) \Dp[x]{\ivaluation{\I}{\theta}}
\)}
An interpretation $\iget[const]{\I}$ maps each game symbol $a$ to a function \(\iget[const]{\I}(a) : \powerset{\linterpretations{\Sigma}{\allvars}} \to \powerset{\linterpretations{\Sigma}{\allvars}}\), where \(\iget[const]{\I}(a)(X) \subseteq \linterpretations{\Sigma}{\allvars}\) are the states from which Angel has a winning strategy to achieve $X\subseteq\linterpretations{Sigma}{V}$.

\begin{definition}[\dGL semantics] \label{def:dGL-semantics}
The \emph{semantics of a \dGL formula} $\phi$ for each interpretation $\iget[const]{\I}$ with a corresponding set of states $\linterpretations{\Sigma}{\allvars}$ is the subset \m{\imodel{\I}{\phi}\subseteq\linterpretations{\Sigma}{\allvars}} of states in which $\phi$ is true.
It is defined inductively as follows
\begin{enumerate}
\item \(\imodel{\I}{\theta\geq\eta} = \{\iget[state]{\I} \in \linterpretations{\Sigma}{\allvars} \with \ivaluation{\I}{\theta}\geq\ivaluation{\I}{\eta}\}\)
\item \(\imodel{\I}{p(\theta_1,\dots,\theta_k)} = \{\iget[state]{\I} \in \linterpretations{\Sigma}{\allvars} \with (\ivaluation{\I}{\theta_1},\dots,\ivaluation{\I}{\theta_k})\in\iget[const]{\I}(p)\}\)
\item \(\imodel{\I}{\lnot\phi} = \scomplement{(\imodel{\I}{\phi})}\)
\(= \linterpretations{\Sigma}{\allvars} \setminus \imodel{\I}{\phi}\)
is the complement of \(\imodel{\I}{\phi}\)
\item \(\imodel{\I}{\phi\land\psi} = \imodel{\I}{\phi} \cap \imodel{\I}{\psi}\)
\item
{\def\Im{\imodif[state]{\I}{x}{r}}%
\(\imodel{\I}{\lexists{x}{\phi}} =  \{\iget[state]{\I} \in \linterpretations{\Sigma}{\allvars} \with \iget[state]{\Im} \in \imodel{\I}{\phi} ~\text{for some}~r\in\reals\}\)
}
\item \(\imodel{\I}{\ddiamond{\alpha}{\phi}} = \iwinreg[\alpha]{\I}{\imodel{\I}{\phi}}\)
\end{enumerate}
A \dGL formula $\phi$ is \emph{valid in $\iget[const]{\I}$}, written \m{\iget[const]{\I}\models{\phi}}, iff it is true in all states, i.e., \m{\imodel{\I}{\phi}=\linterpretations{\Sigma}{\allvars}}.
Formula $\phi$ is \emph{valid}, written \m{\entails\phi}, iff \m{\iget[const]{\I}\models{\phi}} for all interpretations $\iget[const]{\I}$.
\end{definition}

\begin{definition}[Semantics of hybrid games] \label{def:HG-semantics}
The \emph{semantics of a hybrid game} $\alpha$ for each interpretation $\iget[const]{\I}$ is a function \m{\iwinreg[\alpha]{\I}{{\cdot}}} that, for each set of Angel's winning states \m{X\subseteq\linterpretations{\Sigma}{\allvars}}, gives the \emph{winning region}, i.e., the set of states \m{\iwinreg[\alpha]{\I}{X} \subseteq \linterpretations{\Sigma}{\allvars}} from which Angel has a winning strategy to achieve $X$ in $\alpha$ (whatever strategy Demon chooses). It is defined inductively as follows
\begin{enumerate}
\item \(\iwinreg[a]{\I}{X} = \iget[const]{\I}(a)(X)\)
\item \(\iwinreg[\pupdate{\pumod{x}{\theta}}]{\I}{X} = \{\iget[state]{\I} \in \linterpretations{\Sigma}{\allvars} \with \modif{\iget[state]{\I}}{x}{\ivaluation{\I}{\theta}} \in X\}\)

\item \(\iwinreg[\pevolvein{\D{x}=\genDE{x}}{\ivr}]{\I}{X} = \{\iget[state]{\I} \in \linterpretations{\Sigma}{\allvars} \with 
      \iget[state]{\I}=\iget[state]{\Iff[0]}\) on $\scomplement{\{\D{x}\}}$ and \(\iget[state]{\Iff[r]}\in X\)
  for some function \m{\iget[flow]{\If}:[0,r]\to\linterpretations{\Sigma}{\allvars}} of some duration $r$
  satisfying \m{\imodels{\If}{\D{x}=\genDE{x}\land\ivr}}$\}$
  \\
  where \m{\imodels{\If}{\D{x}=\genDE{x}\land\ivr}}
  iff
  \(\imodels{\Iff[\zeta]}{\D{x}=\genDE{x}\land\ivr}\)
  and
  \(\iget[state]{\Iff[0]}=\iget[state]{\Iff[\zeta]}\) on $\scomplement{\{x,\D{x}\}}$ 
  for all \(0{\leq}\zeta{\leq}r\)
  and
  \(\D[t]{\iget[state]{\Iff[t]}(x)}(\zeta)\) exists and equals \(\iget[state]{\Iff[\zeta]}(\D{x})\) for all \(0{\leq}\zeta{\leq}r\) if \m{r{>}0}.

\item \(\iwinreg[\ptest{\ivr}]{\I}{X} = \imodel{\I}{\ivr}\cap X\)
\item \(\iwinreg[\pchoice{\alpha}{\beta}]{\I}{X} = \iwinreg[\alpha]{\I}{X}\cup\iwinreg[\beta]{\I}{X}\)
\item \(\iwinreg[\alpha;\beta]{\I}{X} = \iwinreg[\alpha]{\I}{\iwinreg[\beta]{\I}{X}}\)
\item \(\iwinreg[\prepeat{\alpha}]{\I}{X} = \capfold\{Z\subseteq\linterpretations{\Sigma}{\allvars} \with X\cup\iwinreg[\alpha]{\I}{Z}\subseteq Z\}\)

\item \(\iwinreg[\pdual{\alpha}]{\I}{X} = \scomplement{(\iwinreg[\alpha]{\I}{\scomplement{X}})}\)
\end{enumerate}
\end{definition}

\noindent
The semantics \(\iwinreg[\pevolvein{\D{x}=\genDE{x}}{\ivr}]{\I}{X}\) is the set of all states from which there is a solution of the differential equation \(\D{x}=\genDE{x}\) of some duration that reaches a state in $X$ without ever leaving the set of all states \(\imodel{\I}{\ivr}\) where evolution domain constraint $\ivr$ is true.
The initial value of $\D{x}$ in state $\iget[state]{\I}$ is ignored for that solution.
It is crucial that \(\iwinreg[\prepeat{\alpha}]{\I}{X}\) gives a least fixpoint semantics to repetition \cite{DBLP:journals/tocl/Platzer15}.

\begin{lemma}[{Monotonicity \cite[Lem.\,2.7]{DBLP:journals/tocl/Platzer15}}] \label{lem:monotone}%
  The semantics is \emph{monotone}, i.e., \m{\iwinreg[\alpha]{\I}{X}\subseteq\iwinreg[\alpha]{\I}{Y}} for all $X\subseteq Y$.
\end{lemma}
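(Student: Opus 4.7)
The plan is a routine structural induction on the hybrid game $\alpha$, with the dual and repetition cases being the only ones that require a moment's thought. Throughout, I fix $X\subseteq Y$ and aim to show $\iwinreg[\alpha]{\I}{X}\subseteq\iwinreg[\alpha]{\I}{Y}$, using the clauses of \rref{def:HG-semantics}.

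For the atomic cases, game symbols $a$ are handled by the standing assumption that $\iget[const]{\I}(a)$ is a monotone operator on $\powerset{\linterpretations{\Sigma}{\allvars}}$ (which is needed for \rref{def:HG-semantics} to make sense at all). For assignments, the condition $\modif{\iget[state]{\I}}{x}{\ivaluation{\I}{\theta}}\in X$ trivially passes to $Y$. For differential equations, the witnessing duration $r$ and solution $\iget[flow]{\If}$ for $X$ still witness membership in $\iwinreg[\pevolvein{\D{x}=\genDE{x}}{\ivr}]{\I}{Y}$ because $\iget[state]{\Iff[r]}\in X\subseteq Y$. Tests are immediate from $\imodel{\I}{\ivr}\cap X\subseteq\imodel{\I}{\ivr}\cap Y$.

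The compound cases fall out of the induction hypothesis in a standard manner: choices by unioning the two inclusions; sequences by first applying the IH on $\beta$ to obtain $\iwinreg[\beta]{\I}{X}\subseteq\iwinreg[\beta]{\I}{Y}$, then applying the IH on $\alpha$ with these sets as the argument. The dual case $\pdual{\alpha}$ uses that $X\subseteq Y$ implies $\scomplement{Y}\subseteq\scomplement{X}$, so the IH gives $\iwinreg[\alpha]{\I}{\scomplement{Y}}\subseteq\iwinreg[\alpha]{\I}{\scomplement{X}}$, and taking complements flips the inclusion in the right direction.

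The one case that is genuinely fixpoint-theoretic, and which I expect to be the main obstacle, is repetition $\prepeat{\alpha}$. Here the argument is that $X\subseteq Y$ causes the family of pre-fixpoints to shrink: every $Z$ with $Y\cup\iwinreg[\alpha]{\I}{Z}\subseteq Z$ automatically satisfies $X\cup\iwinreg[\alpha]{\I}{Z}\subseteq Z$, so
\[
\{Z : Y\cup\iwinreg[\alpha]{\I}{Z}\subseteq Z\}\;\subseteq\;\{Z : X\cup\iwinreg[\alpha]{\I}{Z}\subseteq Z\},
\]
and taking intersections reverses the inclusion to yield $\iwinreg[\prepeat{\alpha}]{\I}{X}\subseteq\iwinreg[\prepeat{\alpha}]{\I}{Y}$. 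No appeal to the IH on $\alpha$ is actually needed in this step, though it would also be used to show that the intersected collection is nonempty (e.g., the whole state space is always a pre-fixpoint). This completes the induction.
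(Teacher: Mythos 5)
Your proof is correct and takes essentially the same route as the paper, which gives no inline proof of this lemma but defers to the structural induction of the cited prior work \cite[Lem.\,2.7]{DBLP:journals/tocl/Platzer15}; in particular, your treatment of repetition via shrinkage of the pre-fixpoint family is exactly how the least-fixpoint case is handled there, and the dual case flipping complements twice is likewise standard. The only case genuinely new to this paper is the game symbol $a$, and you are right that the lemma forces $\iget[const]{\I}(a)$ to be monotone---though note that this assumption is needed for the lemma to be \emph{true}, not for \rref{def:HG-semantics} to ``make sense'' (the semantics is well-defined for any function $\powerset{\linterpretations{\Sigma}{\allvars}}\to\powerset{\linterpretations{\Sigma}{\allvars}}$); the paper leaves monotonicity of game symbol interpretations implicit in the intended reading of $\iget[const]{\I}(a)(X)$ as the states from which Angel has a winning strategy to achieve $X$, so flagging it explicitly, as you do, is the right move.
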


\section{Static Semantics}

The central bridge between a logic and its uniform substitutions is the definition of its static semantics via its free and bound variables.
The static semantics captures static variable relationships that are more tractable than the full nuances of the dynamic semantics.
It will be used in crucial ways to ensure that no variable is introduced free into a context within which it is bound during the uniform substitution application.
It is imperative for the soundness of uniform substitution that the static semantics be sound, so expressions only depend on their free variables and only their bound variables change during hybrid games.

The most tricky part for the soundness justification for \dGL is that the semantics of hybrid games is not a reachability relation, such that the usual semantic characterizations of free and bound variables from programs do not work for hybrid games.
Hybrid games have a more involved winning region semantics.

The first step is to define \emph{upward projections} \(\restrictto{X}{V}\) that increase the winning region $X\subseteq\linterpretations{\Sigma}{\allvars}$ from the variables $V\subseteq\allvars$ to all states that are ``on $V$ like $X$'', i.e., similar on $V$ to states in $X$ (and arbitrary on complement $\scomplement{V}$).
The \emph{downward projection} \(\iselectlike{X}{\I}{V}\) shrinks the winning region $X$ and selects the values of state $\iget[state]{\I}$ on variables $V\subseteq\allvars$ to keep just those states of $X$ that agree with $\iget[state]{\I}$ on $V$.
\begin{definition}%
  \label{def:projections}%
  The set \(\restrictto{X}{V} =
  \{ \iget[state]{\It} \in \linterpretations{\Sigma}{\allvars} \with \mexists{\iget[state]{\I}\in X}{\iget[state]{\I}=\iget[state]{\It} ~\text{on}~V}\} \supseteq X\)
  extends $X\subseteq\linterpretations{\Sigma}{\allvars}$ to the states that agree on $V\subseteq\allvars$ with some state in $X$ (written $\mexistsquantifier$).
  The set \(\iselectlike{X}{\I}{V} = \{ \iget[state]{\It}\in X \with \iget[state]{\I}=\iget[state]{\It} ~\text{on}~V\} \subseteq X\)
  selects state $\iget[state]{\I}$ on $V\subseteq\allvars$ in $X\subseteq\linterpretations{\Sigma}{\allvars}$.
\end{definition}

\begin{remark} \label{rem:projections}
  It is easy to check these properties of up and down projections:
  \begin{compactenum}
  \item \label{case:restrict-compose} Composition:
  \(\restrictto{\restrictto{X}{V}}{W} = \restrictto{X}{(V\cap W)}\)
  \item \label{case:restrict-amon} Antimonotone: 
  \(\restrictto{X}{W} \subseteq \restrictto{X}{V}\) for all \(W \supseteq V\)
  \item \label{case:restrict-extreme}
  \(\restrictto{X}{\emptyset} = \linterpretations{\Sigma}{\allvars}\) (unless $X=\emptyset$) and \(\restrictto{X}{\allvars} = X\), where $\allvars$ is the set of all variables
  \item \label{case:select-compose} Composition:
  \(\iselectlike{\iselectlike{X}{\I}{V}}{\I}{W} = \iselectlike{X}{\I}{V\cup W}\)
  \item \label{case:select-amon} Antimonotone: 
  \(\iselectlike{X}{\I}{W} \subseteq \iselectlike{X}{\I}{V}\) for all \(W \supseteq V\)
  \item \label{case:select-extreme}
  \(\iselectlike{X}{\I}{\emptyset} = X\) and \(\iselectlike{X}{\I}{\allvars} = X\cap\{\iget[state]{\I}\}\).
  Thus, \(\iget[state]{\I} \in \iselectlike{X}{\I}{V}\) for any $V$ iff \(\iget[state]{\I} \in X\).
  \end{compactenum}
\end{remark}
\begin{proofatend}
\begin{compactenum}
\item[\ref{case:restrict-compose}.]
\(\restrictto{\restrictto{X}{V}}{W}\) are all states in $\linterpretations{\Sigma}{\allvars}$ that agree on $W$ with a state in \(\restrictto{X}{V}\), which, in turn, are all states that agree on $V$ with a state in $X$.
That is, \(\restrictto{\restrictto{X}{V}}{W}\) are all states that agree on $W$ with some state that agrees on $V$ with a state in $X$, which is the set \(\restrictto{X}{(V\cap W)}\) of states that agree on $V\cap W$ with a state in $X$.

\item[\ref{case:restrict-amon}.]
\(W\supseteq V\) implies \(V=W\cap U\) for some $U$.
By \rref{case:restrict-compose}, \(\restrictto{X}{V}=\restrictto{\restrictto{X}{W}}{U} \supseteq \restrictto{X}{W}\) by \rref{def:projections}.

\item[\ref{case:restrict-extreme}.]
First note \(\restrictto{\emptyset}{V}=\emptyset\) for all $V$.
If $X\neq\emptyset$, then \(\restrictto{X}{\emptyset} = \linterpretations{\Sigma}{\allvars}\), because equality on $\emptyset$ imposes no conditions on the state $\iget[state]{\It}$.
\(\restrictto{X}{\allvars} = X\), because agreement on all variables $\allvars$ implies $\iget[state]{\I}=\iget[state]{\It}$.

\item[\ref{case:select-compose}.]
\(\iselectlike{\iselectlike{X}{\I}{V}}{\I}{W}\) are all states that agree on $W$ with $\iget[state]{\I}$ and are in the set \(\iselectlike{X}{\I}{V}\).
That is, \(\iselectlike{\iselectlike{X}{\I}{V}}{\I}{W}\) are all states in $X$ that agree on $W$ and on $V$ with $\iget[state]{\I}$, which is the set \(\iselectlike{X}{\I}{V\cup W}\).

\item[\ref{case:select-amon}.]
\(W\supseteq V\) implies \(W=V\cup U\) for some $U$.
By \rref{case:select-compose}, \(\iselectlike{X}{\I}{W} = \iselectlike{\iselectlike{X}{\I}{V}}{\I}{U} \subseteq \iselectlike{X}{\I}{V}\) by \rref{def:projections}.

\item[\ref{case:select-extreme}.]
\(\iselectlike{X}{\I}{\emptyset} = X\) since agreement on $\emptyset$ imposes no conditions on $\iget[state]{\It}\in X$.
Furthermore, \(\iselectlike{X}{\I}{\allvars} = X\cap\{\iget[state]{\I}\}\) since agreement on all variables $\allvars$ imposes the condition $\iget[state]{\It}=\iget[state]{\I}$, which is in \(\iselectlike{X}{\I}{\allvars}\) iff $\iget[state]{\I}\in X$.
\qedhere
\end{compactenum}
\end{proofatend}

Projections make it possible to define (\emph{semantic!}) free and bound variables of hybrid games by expressing suitable variable dependence and ignorance.
Variable $x$ is free iff two states that only differ in the value of $x$ have different membership in the winning region for hybrid game $\alpha$ for some winning region $\restrictto{X}{\scomplement{\{x\}}}$ that is insensitive to the value of $x$.
Variable $x$ is bound iff it is in the winning region for hybrid game $\alpha$ for some winning condition $X$ but not for the winning condition \(\iselectlike{X}{\I}{\{x\}}\) that limits the new value of $x$ to stay at its initial value $\iget[state]{\I}(x)$.

\begin{definition}[Static semantics] \label{def:static-semantics}
The \emph{static semantics} defines the \emph{free variables}, which are all variables that the value of an expression depends on,
as well as \emph{bound variables}, $\boundvarsdef{\alpha}$, which can change their value during game $\alpha$, as:
\allowdisplaybreaks%
\begin{align*}
  \freevarsdef{\theta} &= 
  \big\{x \in \allvars \with \mexists{\iget[const]{\I},\iget[state]{\I},\iget[state]{\IIalt}} \text{such that}~ \iget[state]{\I}=\iget[state]{\IIalt} ~\text{on}~\scomplement{\{x\}} ~ \text{and} ~ \ivaluation{\I}{\theta}\neq\ivaluation{\IIalt}{\theta}\big\}
  \\
  \freevarsdef{\phi} &= \big\{x \in \allvars \with \mexists{\iget[const]{\I},\iget[state]{\I},\iget[state]{\IIalt}} \text{such that}~\iget[state]{\I}=\iget[state]{\IIalt} ~\text{on}~\scomplement{\{x\}} ~ \text{and} ~ \imodels{\I}{\phi}\not\ni\iget[state]{\IIalt}\big\}
  \\
  \freevarsdef{\alpha} &= \big\{x \in \allvars \with \mexists{\iget[const]{\I},\iget[state]{\I},\iget[state]{\IIalt},X}
  \text{with}~\iget[state]{\I}=\iget[state]{\IIalt} ~\text{on}~\scomplement{\{x\}} ~\text{and}~
  \iget[state]{\I} \in \iwinreg[\alpha]{\I}{\restrictto{X}{\scomplement{\{x\}}}} \not\ni \iget[state]{\IIalt}\big\}
  \\
  \boundvarsdef{\alpha} &= \big\{x \in \allvars \with \mexists{\iget[const]{\I},\iget[state]{\I},X} \text{such that} ~ \iwinreg[\alpha]{\I}{X} \ni \iget[state]{\I} \not\in \iwinreg[\alpha]{\I}{\iselectlike{X}{\I}{\{x\}}} \big\}
\end{align*}
The \dfn{signature}, i.e., set of function, predicate, 
\ifpredicationals
quantifier,
\fi%
and game symbols in $\phi$ is denoted \(\intsigns{\phi}\); accordingly \(\intsigns{\theta}\) for term $\theta$ and \(\intsigns{\alpha}\) for hybrid game $\alpha$.
\end{definition}

The static semantics from \rref{def:static-semantics} satisfies the coincidence property (the value of an expression only depends on the values of its free variables) and bound effect property (a hybrid game only changes the values of its bound variables).

\begin{lemma}[Coincidence for terms] \label{lem:coincidence-term}
  $\freevarsdef{\theta}$ is the smallest set with the coincidence property for $\theta$:
  If \(\iget[state]{\I}=\iget[state]{\Ialt}\) on $\freevarsdef{\theta}$
  and \(\iget[const]{\I}=\iget[const]{\Ialt}\) on $\intsigns{\theta}$ then
  \m{\ivaluation{\I}{\theta}=\ivaluation{\Ialt}{\theta}}.
\end{lemma}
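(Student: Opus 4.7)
The plan is to split the proof into two parts, minimality of $\freevarsdef{\theta}$ and the coincidence property itself. A preliminary structural induction on $\theta$ shows that $\ivaluation{\I}{\theta}$ depends only on the interpretation of the symbols in $\intsigns{\theta}$, so for the coincidence property we may reduce to the case $\iget[const]{\I} = \iget[const]{\Ialt}$ and focus on the state.

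Minimality is the easier half and falls directly out of the semantic definition. Suppose $V \subseteq \allvars$ has the coincidence property for $\theta$. For any $x \in \freevarsdef{\theta}$, the definition supplies witnesses $\iget[const]{\I}, \iget[state]{\I}, \iget[state]{\IIalt}$ with $\iget[state]{\I} = \iget[state]{\IIalt}$ on $\scomplement{\{x\}}$ but $\ivaluation{\I}{\theta} \neq \ivaluation{\IIalt}{\theta}$. If $x \notin V$, then $V \subseteq \scomplement{\{x\}}$, so the two states also agree on $V$; coincidence for $V$ would then force equal values, contradicting the witnesses. Hence $\freevarsdef{\theta} \subseteq V$.

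Coincidence is more delicate, since semantic free variables do not compose as a union (for instance $\freevarsdef{x + (-x)} = \emptyset$ rather than $\{x\}$), so a direct structural induction on $\theta$ cannot compute $\freevarsdef{\theta}$. I would bridge this gap by introducing a syntactic over-approximation $V_\theta$ defined by structural recursion: $V_x = \{x\}$, $V_{f(\theta_1,\dots,\theta_k)} = \bigcup_i V_{\theta_i}$, $V_{\theta+\eta} = V_\theta \cup V_\eta$, $V_{\theta\cdot\eta} = V_\theta \cup V_\eta$, and $V_{\der{\theta}} = V_\theta \cup \D{V_\theta}$. A routine structural induction shows that $V_\theta$ is finite and has the coincidence property for $\theta$; the differential case is handled using the differential-form semantics together with the observation that $\Dp[x]{\ivaluation{\I}{\theta}}$ vanishes whenever $x \notin V_\theta$.

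Coincidence for $\freevarsdef{\theta}$ itself then follows by a one-variable-at-a-time walking argument. Assume $\iget[state]{\I} = \iget[state]{\Ialt}$ on $\freevarsdef{\theta}$, with $\iget[const]{\I} = \iget[const]{\Ialt}$ by the initial reduction. By the minimality part, $\freevarsdef{\theta} \subseteq V_\theta$. Using coincidence for $V_\theta$, we may first replace $\iget[state]{\Ialt}$ outside $V_\theta$ by the values of $\iget[state]{\I}$ without changing the value of $\theta$. The resulting states differ only on the finite set $V_\theta \setminus \freevarsdef{\theta}$; enumerate it as $y_1,\dots,y_n$ and form a chain from $\iget[state]{\I}$ to $\iget[state]{\Ialt}$ that changes one $y_i$ at a time. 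Since $y_i \notin \freevarsdef{\theta}$, the contrapositive of the semantic definition forces equal term values at each adjacent pair in the chain, and composing yields $\ivaluation{\I}{\theta} = \ivaluation{\Ialt}{\theta}$. The principal obstacle is precisely this mediation between the semantic definition of $\freevarsdef{\theta}$ and the finitary reasoning needed for coincidence; the syntactic $V_\theta$ supplies finiteness, while minimality ensures that working with $\freevarsdef{\theta}$ instead of $V_\theta$ does not weaken the conclusion.
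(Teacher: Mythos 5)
Your proof is correct, but it takes a genuinely different route from the paper, whose entire proof of this lemma is a one-line transfer: since \dGL leaves the syntax, semantics, and semantic free-variable definition of terms exactly as in \dL, the paper simply cites the corresponding coincidence lemma from the \dL uniform substitution calculus \cite[Lem.\,10]{DBLP:journals/jar/Platzer17}. Your blind reconstruction is a sound stand-alone replacement for that citation: the minimality half is, as you say, the immediate contrapositive of the semantic definition, and the delicate half correctly identifies the needed finiteness bridge between the semantic \(\freevarsdef{\theta}\) and any inductive argument. Your syntactic over-approximation \(V_\theta\) handles this well, including the differential case, where \(\Dp[x]{\ivaluation{\I}{\theta}}\) vanishes for \(x\notin V_\theta\) so that \(V_\theta\cup\D{V_\theta}\) suffices for \(\der{\theta}\); the subsequent one-variable-at-a-time chain across the finite set \(V_\theta\setminus\freevarsdef{\theta}\) is licensed at each step by the universal statement that non-membership in \(\freevarsdef{\theta}\) provides, and minimality guarantees \(\freevarsdef{\theta}\subseteq V_\theta\). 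It is worth contrasting your finite chain with the technique the paper uses for the analogous games lemma (\rref{lem:coincidence-HG}), where no syntactic finiteness is available and the family of ignorable variable sets is instead closed under countable unions via a sequence of intermediate states; your argument is the term-level specialization that works precisely because a term mentions only finitely many variables. The paper's citation buys modularity (nothing about terms changed, so nothing is re-proved); your proof buys self-containedness at the cost of re-establishing the \dL-level facts.
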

\begin{proofatend}
By \cite[Lem.\,10]{DBLP:journals/jar/Platzer17}, as semantics and free variables of terms are as in \dL. \qedhere
\end{proofatend}

\begin{lemma}[Coincidence for formulas] \label{lem:coincidence}
  $\freevarsdef{\phi}$ is the smallest set with the coincidence property for $\phi$:
  If \(\iget[state]{\I}=\iget[state]{\Ialt}\) on $\freevarsdef{\phi}$
  and \(\iget[const]{\I}=\iget[const]{\Ialt}\) on $\intsigns{\phi}$, then
  \m{\imodels{\I}{\phi}} iff \m{\imodels{\Ialt}{\phi}}.
\end{lemma}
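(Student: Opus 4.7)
\medskip
\noindent\textbf{Proof plan.}
The plan is to prove the two halves separately: first the coincidence property itself by structural induction on $\phi$, then minimality as an almost-immediate consequence of the definition of $\freevarsdef{\phi}$. Because formulas contain hybrid games via $\ddiamond{\alpha}{\phi}$ and hybrid games contain formulas in tests, evolution domains, and inside $\iwinreg[\alpha]{\I}{\cdot}$, the induction has to be mutual with a coincidence lemma for hybrid games of the form: if $\iget[state]{\I}=\iget[state]{\Ialt}$ on $\freevarsdef{\alpha}$ and $\iget[const]{\I}=\iget[const]{\Ialt}$ on $\intsigns{\alpha}$, and $X,Y\subseteq\linterpretations{\Sigma}{\allvars}$ coincide in the appropriate projected sense, then $\iget[state]{\I}\in\iwinreg[\alpha]{\I}{X}$ iff $\iget[state]{\Ialt}\in\iwinreg[\alpha]{\Ialt}{Y}$. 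I will use that game-coincidence lemma as a black box here; what matters is that at the formula level I only need its instance with $X=\imodel{\I}{\phi}$ and $Y=\imodel{\Ialt}{\phi}$, which by the induction hypothesis for $\phi$ agree on those states that are relevant.

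In the induction, the atomic arithmetic case $\theta\geq\eta$ is immediate from \rref{lem:coincidence-term}: if the states agree on $\freevarsdef{\theta}\cup\freevarsdef{\eta}$ and interpretations on $\intsigns{\theta}\cup\intsigns{\eta}$, then $\ivaluation{\I}{\theta}=\ivaluation{\Ialt}{\theta}$ and similarly for $\eta$. The predicate case $p(\theta_1,\dots,\theta_k)$ is analogous using that $p\in\intsigns{\phi}$ so $\iget[const]{\I}(p)=\iget[const]{\Ialt}(p)$. The Boolean cases $\lnot\phi$, $\phi\land\psi$ are routine set-theoretic manipulations from the induction hypotheses, using $\freevarsdef{\phi\land\psi}\subseteq\freevarsdef{\phi}\cup\freevarsdef{\psi}$. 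For $\lexists{x}{\phi}$, the free variable set excludes $x$, and one observes that $\iget[state]{\imodif[state]{\I}{x}{r}}$ and $\iget[state]{\imodif[state]{\Ialt}{x}{r}}$ still agree on $\freevarsdef{\phi}$ because they agree on $\freevarsdef{\lexists{x}{\phi}}$ outside $\{x\}$ and are equal at $x$ by the modification, which reduces the claim to the induction hypothesis for $\phi$.

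The main obstacle is the modal case $\phi\equiv\ddiamond{\alpha}{\psi}$. Here $\imodel{\I}{\phi}=\iwinreg[\alpha]{\I}{\imodel{\I}{\psi}}$, so I need to compare $\iwinreg[\alpha]{\I}{\imodel{\I}{\psi}}$ and $\iwinreg[\alpha]{\Ialt}{\imodel{\Ialt}{\psi}}$ at the two agreeing states. The induction hypothesis for $\psi$ tells me that $\imodel{\I}{\psi}$ and $\imodel{\Ialt}{\psi}$ agree on all pairs of states that themselves agree on $\freevarsdef{\psi}$; this is exactly the kind of ``projection-compatible'' hypothesis on the winning conditions that the hybrid games coincidence lemma will require, since the definition of $\freevarsdef{\alpha}$ is phrased through the upward projection $\restrictto{X}{\scomplement{\{x\}}}$. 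The nontriviality concentrates in repetition $\prepeat{\alpha}$, where the fixpoint semantics forces a transfinite induction rather than the finite-iteration argument available for hybrid systems; but that burden belongs to the hybrid games coincidence lemma, not to this formula-level lemma.

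Finally, for minimality, suppose $S\subseteq\allvars$ has the coincidence property for $\phi$ and pick any $x\in\freevarsdef{\phi}$. By \rref{def:static-semantics} there exist $\iget[const]{\I}$ and two states $\iget[state]{\I}=\iget[state]{\IIalt}$ on $\scomplement{\{x\}}$ with $\imodels{\I}{\phi}\not\ni\iget[state]{\IIalt}$. If $x\notin S$, then $S\subseteq\scomplement{\{x\}}$, so these states agree on $S$ and trivially on $\intsigns{\phi}$ (same interpretation), contradicting the coincidence property of $S$. Hence $x\in S$, proving $\freevarsdef{\phi}\subseteq S$.
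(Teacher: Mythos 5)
Your minimality half is correct and is the standard argument, but the first half---that $\freevarsdef{\phi}$ \emph{itself} has the coincidence property---has a genuine gap: structural induction on $\phi$ is the wrong tool for the \emph{semantic} free variables of \rref{def:static-semantics}, because they are not compositional in the direction your induction needs. Your step for $\phi\land\psi$ needs that states agreeing on $\freevarsdef{\phi\land\psi}$ also agree on $\freevarsdef{\phi}$ and on $\freevarsdef{\psi}$, i.e., it needs $\freevarsdef{\phi}\cup\freevarsdef{\psi}\subseteq\freevarsdef{\phi\land\psi}$; the inclusion you cite, $\freevarsdef{\phi\land\psi}\subseteq\freevarsdef{\phi}\cup\freevarsdef{\psi}$, points the wrong way. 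And the needed inclusion is false: taking $\psi$ to be $0\geq1$ gives $\imodel{\I}{\phi\land\psi}=\emptyset$ for every $\iget[const]{\I}$, hence $\freevarsdef{\phi\land\psi}=\emptyset$ by \rref{def:static-semantics} even when $\freevarsdef{\phi}\neq\emptyset$, so your hypothesis imposes no agreement at all and the induction hypothesis for $\phi$ cannot be invoked. The same collapse hits $\lexists{x}{\phi}$ (the valid $\lexists{x}{(x\geq y\land y\geq x)}$ has empty semantic free variables while $y$ is free in the kernel) and the modal case (take $\alpha$ to be $\ptest{0\geq1}$, so $\freevarsdef{\ddiamond{\alpha}{\psi}}=\emptyset$ regardless of $\psi$). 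What structural induction does prove is the coincidence property for syntactically computed \emph{supersets} of the free variables, which is how computable overapproximations are justified; but the lemma asserts the property for the semantic $\freevarsdef{\phi}$, and since the coincidence property is inherited by supersets, not subsets, no after-the-fact minimality argument closes that distance.

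The paper's proof avoids structural induction altogether: it transfers \cite[Lem.\,11]{DBLP:journals/jar/Platzer17}, and explicitly notes that this is legitimate \emph{because} that proof is by induction on the set of free variables, independently of the syntactic structure of $\phi$---which is also exactly why the changed game semantics of modalities is harmless and no mutual induction with a game lemma is needed at all. If you want a self-contained argument, mimic the proof of \rref{lem:coincidence-HG}: let $\mathcal{M}$ be the family of sets $M\subseteq\allvars$ such that agreement of two states on $\scomplement{M}$ preserves membership in $\imodel{\I}{\phi}$; then $\{x\}\in\mathcal{M}$ for each $x\notin\freevarsdef{\phi}$ immediately by \rref{def:static-semantics}, $\mathcal{M}$ is closed under countable unions via the intermediate-state construction, and hence $\scomplement{\freevarsdef{\phi}}\in\mathcal{M}$; interpretation agreement on $\intsigns{\phi}$ is a separate easy induction. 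A secondary remark: the game-coincidence black box you posit, with two \emph{different} winning conditions $X$ and $Y$ on the two sides, is strictly stronger than \rref{lem:coincidence-HG} as stated (which has the same projected condition $\restrictto{X}{V}$ on both sides), so even in the modal case it could not simply be cited.
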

\begin{proofatend}
The semantics of formulas and their semantic free variables is analogous to \dL, so \cite[Lem.\,11]{DBLP:journals/jar/Platzer17} transfers, because its proof is by induction on the set of free variables independently of the particular syntactic structure of the formula $\phi$ and, thus, the proof is not affected by the modified meaning of modalities.
\qedhere
\end{proofatend}

\noindent
From which states a hybrid game $\alpha$ can be won only depends on $\alpha$, the winning region, and the values of its free variables, as \(\restrictto{X}{\freevarsdef{\alpha}}\) is only sensitive to $\freevarsdef{\alpha}$.

\begin{lemma}[Coincidence for games] \label{lem:coincidence-HG}
  The set $\freevarsdef{\alpha}$ is the smallest set with the coincidence property for $\alpha$:
  If \(\iget[state]{\I}=\iget[state]{\Ialt}\) on $V\supseteq\freevarsdef{\alpha}$
  and \(\iget[const]{\I}=\iget[const]{\Ialt}\) on $\intsigns{\alpha}$,
  then 
  \(\iwin[\alpha]{\I}{\restrictto{X}{V}}\) iff
  \(\iwin[\alpha]{\Ialt}{\restrictto{X}{V}}\).

\centerline{
\begin{tikzpicture}[scale=0.8, every node/.style={transform shape}]
  \useasboundingbox (-3.2,-1.6) rectangle (2,1.2); 
    \node[draw=vgreen,fill=vgreen!20,shape=rectangle,minimum height=2.7cm,minimum width=1cm] (Xalt) at (1.2,-0.2) {};
    \node[draw=vblue,fill=vblue!20,shape=ellipse,minimum height=1.3cm,minimum width=3cm,opacity=0.5] at (0.1,-0.8) {};
    \node at (1.2,0.2) {\m{\restrictto{X}{V}}};
    \node[winning condition,minimum width=0.5cm] (X) at (1.2,-0.8) {$X$};
    \node at (0,-1.2) {\(\iwinreg[\alpha]{\I}{X}\)};
    \node (v) at (-1,-0.8) {$\iget[state]{\I}$};
    \node (valt) at (-1,0.8) {$\iget[state]{\Ialt}$};
  \path
    (v) edge [similar state] node [left,align=right] {on $V\supseteq\freevarsdef{\alpha}$} (valt)
            edge [transition] node [above] {$\alpha$} (1.1,-0.8)
    (valt) edge [transition,transition exists] node [above] {$\alpha$}
            (1.1,0.8);
\end{tikzpicture}
}
\end{lemma}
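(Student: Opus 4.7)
My plan is to prove the coincidence direction by structural induction on the hybrid game $\alpha$, strengthening the inductive statement to match the compositional shape of \rref{def:HG-semantics}. The strengthened claim I carry is: whenever $V \supseteq \freevarsdef{\alpha}$, $\iget[const]{\I} = \iget[const]{\Ialt}$ on $\intsigns{\alpha}$, and $S\subseteq\linterpretations{\Sigma}{\allvars}$ is any \emph{$V$-closed} set (meaning $\restrictto{S}{V}=S$), then $\iwinreg[\alpha]{\I}{S} = \iwinreg[\alpha]{\Ialt}{S}$ and this winning region is itself $V$-closed. Specialising to $S = \restrictto{X}{V}$, which is $V$-closed by idempotence of upward projection (\rref{rem:projections}, \rref{case:restrict-compose}), and combining $V$-closure of the resulting winning region with the hypothesis $\iget[state]{\I} = \iget[state]{\Ialt}$ on $V$, yields the biconditional of the lemma.

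The base cases are routine. For the atomic game symbol $a$, agreement on $\intsigns{a}$ makes $\iget[const]{\I}(a) = \iget[const]{\Ialt}(a)$, and $V$-closure of the image is read off the definition of $\freevarsdef{a}$. For $\pupdate{\pumod{x}{\theta}}$, $\ptest{\ivr}$, and the differential equation $\pevolvein{\D{x}=\genDE{x}}{\ivr}$, I invoke \rref{lem:coincidence-term} on $\theta$ and on $\genDE{x}$, and \rref{lem:coincidence} on $\ivr$, to identify term values along the flow and the admissibility of the evolution domain, and then verify $V$-closure by direct inspection of the semantics. Among the compound cases, $\pchoice{\alpha}{\beta}$ is a routine union; $\alpha;\beta$ proceeds by applying the IH first to $\beta$ on $S$ (yielding a $V$-closed $\iwinreg[\beta]{\I}{S} = \iwinreg[\beta]{\Ialt}{S}$) and then to $\alpha$ on that set; and $\pdual{\alpha}$ uses that complementation preserves $V$-closure, since the biconditional defining $V$-closure is symmetric in membership.

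The main obstacle is the repetition case $\prepeat{\alpha}$, whose semantics in \rref{def:HG-semantics} is the least fixpoint of the monotone (by \rref{lem:monotone}) operator $F_{\I}(Z) = S \cup \iwinreg[\alpha]{\I}{Z}$. I would climb the Knaster--Tarski transfinite approximation $Z_0 = \emptyset$, $Z_{\kappa+1} = F_{\I}(Z_\kappa)$, $Z_\lambda = \bigcup_{\kappa<\lambda} Z_\kappa$ at limits, and analogously $Z'_\kappa$ built from $\Ialt$. Transfinite induction on $\kappa$ then shows $Z_\kappa = Z'_\kappa$ and $Z_\kappa$ is $V$-closed: the successor step uses the IH on $\alpha$ together with closure of $V$-closed sets under binary unions (with the $V$-closed $S$), and the limit step uses closure under arbitrary unions. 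At the first closure ordinal (bounded by $|\powerset{\linterpretations{\Sigma}{\allvars}}|$) the two fixpoints coincide and are $V$-closed, completing the induction.

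For the minimality assertion, I would unfold \rref{def:static-semantics}: each $x \in \freevarsdef{\alpha}$ comes with explicit witnesses $\iget[const]{\I}, \iget[state]{\I}, \iget[state]{\IIalt}, X$ distinguishing membership in $\iwinreg[\alpha]{\I}{\restrictto{X}{\scomplement{\{x\}}}}$, so applying the coincidence property of any candidate $V$ to the derived winning region $\restrictto{X}{\scomplement{\{x\}}}$ (which by composition equals $\restrictto{(\restrictto{X}{\scomplement{\{x\}}})}{V}$ whenever $V\subseteq\scomplement{\{x\}}$) forces $x$ to lie in $V$.
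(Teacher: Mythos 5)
Your structural induction on $\alpha$ breaks down before it starts: $\freevarsdef{\alpha}$ is the \emph{semantic} free-variable set of \rref{def:static-semantics}, and it is not monotone under subgames, so the induction hypothesis is unavailable exactly where you invoke it. In the case $\alpha;\beta$ you apply the IH to $\beta$ with the same $V$, which requires \(V\supseteq\freevarsdef{\beta}\); but you only have \(V\supseteq\freevarsdef{\alpha;\beta}\), and in general \(\freevarsdef{\beta}\not\subseteq\freevarsdef{\alpha;\beta}\). Concretely, for $\beta$ the test game \(\ptest{x>0}\) and $\alpha$ the assignment \(\pupdate{\pumod{x}{1}}\), one has \(x\in\freevarsdef{\beta}\) yet \(x\notin\freevarsdef{\alpha;\beta}\), since the assignment overwrites $x$ before the test is evaluated. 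The same failure occurs in your repetition case: \(\iwinreg[\prepeat{(\ptest{x>0})}]{\I}{X}=X\) is the least fixpoint, so \(\freevarsdef{\prepeat{(\ptest{x>0})}}=\emptyset\) while \(\freevarsdef{\ptest{x>0}}=\{x\}\), and your transfinite iteration (fine in itself) inherits the broken IH at every successor step. Your strengthened invariant is also false on this example: with \(V=\scomplement{\{x\}}\) and $S$ being $V$-closed, \(\iwinreg[\ptest{x>0}]{\I}{S}=\imodel{\I}{x>0}\cap S\) is not $V$-closed, so the intermediate winning regions fed into the outer game in the composition and repetition cases do not satisfy the hypotheses you need. (A smaller slip occurs in your minimality argument: for \(V\subseteq\scomplement{\{x\}}\), \rref{rem:projections}(\ref{case:restrict-compose}) gives \(\restrictto{(\restrictto{X}{\scomplement{\{x\}}})}{V}=\restrictto{X}{(V\cap\scomplement{\{x\}})}=\restrictto{X}{V}\), which is not \(\restrictto{X}{\scomplement{\{x\}}}\); the correct move is to use that the coincidence property quantifies over all supersets and instantiate it at \(V=\scomplement{\{x\}}\supseteq W\), as the paper does.)

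The paper's proof avoids structural induction on $\alpha$ entirely, and that is its essential point: it inducts on the \emph{set of variables} on which the two states disagree. It defines the family $\mathcal{M}$ of sets $M$ such that agreement off $M$ transfers membership in \(\iwinreg[\alpha]{\I}{\restrictto{X}{V}}\), shows \(\{x\}\in\mathcal{M}\) for each \(x\notin V\) directly from the semantic definition of \(\freevarsdef{\alpha}\) together with \rref{lem:monotone} and \rref{rem:projections}(\ref{case:restrict-compose}), proves closure of $\mathcal{M}$ under countable unions via a chain of intermediate states $\iget[state]{\Imidi[n]}$, and concludes \(\scomplement{\freevarsdef{\alpha}}\in\mathcal{M}\); agreement of the interpretations on \(\intsigns{\alpha}\) is then a separate, simple induction. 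If you want a compositional recursion over game constructors, you would first need syntactic over-approximations \(\freevars{\alpha}\supseteq\freevarsdef{\alpha}\) (as in the paragraph after \rref{lem:bound}), but such over-approximations cannot establish that the semantic \(\freevarsdef{\alpha}\) itself is the \emph{smallest} set with the coincidence property, which is what the lemma claims.
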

\begin{proof}
\let\Ialt\IIalt
\let\Italt\IItalt
Let $\mathcal{M}$ be the set of all sets $M\subseteq\allvars$ satisfying for all $\iget[const]{\I},\iget[state]{\I},\iget[state]{\Ialt},X$ that \(\iget[state]{\I}=\iget[state]{\Ialt}\) on $\scomplement{M}$ implies:
\(\iwin[\alpha]{\I}{\restrictto{X}{V}}\) iff 
\(\iwin[\alpha]{\Ialt}{V}\).
One implication suffices.
\begin{compactenum}
\item If $x\not\in V$, then \(\{x\}\in\mathcal{M}\):
Assume \(\iget[state]{\I}=\iget[state]{\Ialt}\) on $\scomplement{\{x\}}$ and
\(\iwin[\alpha]{\I}{\restrictto{X}{V}} \subseteq \iwinreg[\alpha]{\I}{\restrictto{\restrictto{X}{V}}{\{x\}}}\) by Lem.\,\ref{lem:monotone}, \rref{def:projections}.
Then, as $x\not\in\freevarsdef{\alpha}$,
\(\iwin[\alpha]{\Ialt}{\restrictto{\restrictto{X}{V}}{\{x\}}}\)
= \(\iwinreg[\alpha]{\Ialt}{\restrictto{X}{(V\cap\scomplement{\{x\}})}}\) by \rref{rem:projections}(\ref{case:restrict-compose}).
Finally, \(\restrictto{X}{(V\cap\scomplement{\{x\}})} = \restrictto{X}{V}\) as $x\not\in V$.

\item If $M_i\in\mathcal{M}$ is a sequence of sets in $\mathcal{M}$, then \(\cupfold_{i\in\naturals} M_i \in \mathcal{M}\):
Assume \(\iget[state]{\I}=\iget[state]{\Ialt}\) on \(\scomplement{(\cupfold_i M_i)}\)
and \(\iwin[\alpha]{\I}{\restrictto{X}{V}}\).
The state $\iget[state]{\Imidi[n]}$ defined as $\iget[state]{\Ialt}$ on \(\cupfold_{i<n} M_i\) and as $\iget[state]{\I}$ on \(\scomplement{(\cupfold_{i<n} M_i)}\) satisfies \(\iwin[\alpha]{\Imidi[n]}{\restrictto{X}{V}}\) by induction on $n$.
For $n=0$, $\iget[state]{\Imidi[0]}=\iget[state]{\I}$.
Since \(\iget[state]{\Imidi[n]}=\iget[state]{\Imidi[n+1]}\) on $\scomplement{M_n}$ and $M_n\in\mathcal{M}$, 
\(\iwin[\alpha]{\Imidi[n]}{\restrictto{X}{V}}\)
implies
\(\iwin[\alpha]{\Imidi[n+1]}{\restrictto{X}{V}}\).
Finally, \(\iget[state]{\I}=\iget[state]{\Ialt}=\iget[state]{\Imidi[n]}\) on \(\scomplement{(\cupfold_i M_i)}\) already.
\end{compactenum}
This argument succeeds for any $V\supseteq\freevarsdef{\alpha}$, so \(\scomplement{\freevarsdef{\alpha}} \in \mathcal{M}\) as a (countable) union of \(\{x\}\) for all $x\not\in\freevarsdef{\alpha}$.
{%
\let\Ialt\IJalt%
\let\Italt\IJtalt%
Finally, if \(\iget[const]{\I}=\iget[const]{\Ialt}\) on $\intsigns{\alpha}$ then also \(\iwin[\alpha]{\Ialt}{\restrictto{X}{V}}\) by a simple induction, since $\iget[const]{\I}$ gives meaning to function, predicate,
\ifpredicationals
quantifier,
\fi%
and game symbols, but only those that occur in $\alpha$ are relevant.
}

No set $W\not\supseteq\freevarsdef{\alpha}$ has the coincidence property for $\alpha$,
because there, then, is a variable $x\in\freevarsdef{\alpha}\setminus W$,
which implies there are \(\iget[const]{\I},X, \iget[state]{\I}=\iget[state]{\Ialt}\) on $\scomplement{\{x\}}\supseteq W$ such that
\(\iget[state]{\I} \in \iwinreg[\alpha]{\I}{\restrictto{X}{\scomplement{\{x\}}}} \not\ni \iget[state]{\Ialt}\).
But for the set $V\mdefeq\scomplement{\{x\}} \supseteq W$ it is, then, the case that 
\(\iwin[\alpha]{\I}{\restrictto{X}{V}}\) but
\(\inowin[\alpha]{\Ialt}{\restrictto{X}{V}}\).
\qedhere
\end{proof}

\noindent
By \rref{def:projections} and \rref{lem:monotone}, \(\iwin[\alpha]{\I}{X}\) implies \(\iwin[\alpha]{\I}{\restrictto{X}{V}}\) for all $V\subseteq\allvars$.
All supersets of \(\freevarsdef{\theta}\) or \(\freevarsdef{\phi}\) or \(\freevarsdef{\alpha}\) have the respective coincidence property.

Only its bound variables $\boundvarsdef{\alpha}$ change their values during hybrid game $\alpha$, because from any state from which $\alpha$ can be won to achieve $X$, one can already win $\alpha$ to achieve $\iselectlike{X}{\I}{\scomplement{\boundvarsdef{\alpha}}}$, which stays at $\iget[state]{\I}$ except for the values of $\boundvarsdef{\alpha}$.

\begin{lemma}[Bound effect] \label{lem:bound}
  The set $\boundvarsdef{\alpha}$ is the smallest set with the bound effect property: 
  \(\iwin[\alpha]{\I}{X}\) iff \(\iwin[\alpha]{\I}{\iselectlike{X}{\I}{\scomplement{\boundvarsdef{\alpha}}}}\).

\centerline{
\begin{tikzpicture}[scale=0.8, every node/.style={transform shape}]
  \useasboundingbox (-3.2,-1.3) rectangle (2,1.3); 
    \node[winning condition,shape=ellipse,minimum height=2.5cm,minimum width=1.2cm] (X) at (1.2,0) {};
    \node[draw=vblue,fill=vblue!20,shape=ellipse,minimum height=1.5cm,minimum width=3.3cm,opacity=0.5,rotate=-30] at (0.1,0.1) {};
    \node[winning condition,draw=none,fill=none] at (1.2,0.8) {$X$};
    \node[winning condition,draw=vgreen,fill=vgreen!20,text=black,minimum width=0.5cm,inner sep=0pt] (selectX) at (1.2,-0.5) {$X{\downarrow}\iget[state]{\I}$};
    \node[rotate=-30] at (-0.2,-0.1) {\(\scriptstyle\iwinreg[\alpha]{\I}{\iselectlike{X}{\I}{\scomplement{\boundvarsdef{\alpha}}}}\)};
    \node (v) at (-1,0.8) {$\iget[state]{\I}$};
  \path
    (v) edge [transition] node [above,pos=0.7] {$\alpha$} (1,0.8);
  \path
    (v) edge [transition,dashed] node [above] {$\alpha$} (selectX);
\end{tikzpicture}
}
\end{lemma}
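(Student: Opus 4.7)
The $\Leftarrow$ direction follows from \rref{lem:monotone} since $\iselectlike{X}{\I}{\scomplement{\boundvarsdef{\alpha}}}\subseteq X$.

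For the $\Rightarrow$ direction I mirror the skeleton of \rref{lem:coincidence-HG}: let $\mathcal{B}$ be the family of $V\subseteq\allvars$ such that $\iwin[\alpha]{\I}{X}$ implies $\iwin[\alpha]{\I}{\iselectlike{X}{\I}{V}}$ for all $\iget[const]{\I},\iget[state]{\I},X$, and argue $\scomplement{\boundvarsdef{\alpha}}\in\mathcal{B}$. Two closure facts are direct. First, every singleton $\{x\}$ with $x\notin\boundvarsdef{\alpha}$ is in $\mathcal{B}$ by the very definition of $\boundvarsdef{\alpha}$. Second, $\mathcal{B}$ is closed under finite unions via the composition $\iselectlike{\iselectlike{X}{\I}{V}}{\I}{W}=\iselectlike{X}{\I}{V\cup W}$ of \rref{rem:projections}(\ref{case:select-compose}), iterating the singleton property. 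Hence every finite subset of $\scomplement{\boundvarsdef{\alpha}}$ lies in $\mathcal{B}$.

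The main obstacle is lifting this to the full, possibly countably infinite, set $\scomplement{\boundvarsdef{\alpha}}=\bigcup_{x\notin\boundvarsdef{\alpha}}\{x\}$. For an enumeration $V_n\nearrow\scomplement{\boundvarsdef{\alpha}}$ we get $\iget[state]{\I}\in\iwinreg[\alpha]{\I}{\iselectlike{X}{\I}{V_n}}$ for every $n$, and must pass to the limit $\iget[state]{\I}\in\iwinreg[\alpha]{\I}{\bigcap_n\iselectlike{X}{\I}{V_n}}=\iwinreg[\alpha]{\I}{\iselectlike{X}{\I}{\scomplement{\boundvarsdef{\alpha}}}}$. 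I plan to prove this downward-continuity of $\iwinreg[\alpha]{\I}{\cdot}$ along such select-chains by structural induction on $\alpha$. The atomic games and the compound operators for choice, sequential composition, and dual follow directly from the semantic clauses of \rref{def:HG-semantics}. The delicate case is $\alpha=\prepeat{\beta}$, whose winning region is a Knaster-Tarski least fixpoint: I plan to define $T\mdefeq\{\iget[state]{\It}:\iget[state]{\It}=\iget[state]{\I}\text{ on }V\Rightarrow\iget[state]{\It}\in\iwinreg[\prepeat{\beta}]{\I}{\iselectlike{X}{\I}{V}}\}$ and verify that $T$ is a pre-fixpoint of $Z\mapsto X\cup\iwinreg[\beta]{\I}{Z}$. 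The base $X\subseteq T$ is direct: any $\iget[state]{\It}\in X$ agreeing with $\iget[state]{\I}$ on $V$ already lies in $\iselectlike{X}{\I}{V}\subseteq\iwinreg[\prepeat{\beta}]{\I}{\iselectlike{X}{\I}{V}}$ by the pre-fixpoint property. The inductive step $\iwinreg[\beta]{\I}{T}\subseteq T$ combines the inductive hypothesis for $\beta$ with \rref{lem:monotone} to transport winning from $T$ down to $\iwinreg[\prepeat{\beta}]{\I}{\iselectlike{X}{\I}{V}}$, absorbing it via the pre-fixpoint property of $\iwinreg[\prepeat{\beta}]{\I}{\iselectlike{X}{\I}{V}}$. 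Least-fixpoint minimality then forces $\iwinreg[\prepeat{\beta}]{\I}{X}\subseteq T$, and specializing to $\iget[state]{\I}$, which trivially agrees with itself on $V$, closes this case.

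Minimality of $\boundvarsdef{\alpha}$ is the easy half: any set $W$ with the bound effect property must contain $\boundvarsdef{\alpha}$, for otherwise the definitional witnesses $(\iget[const]{\I},\iget[state]{\I},X)$ for some $x\in\boundvarsdef{\alpha}\setminus W$ give $\iget[state]{\I}\in\iwinreg[\alpha]{\I}{X}$ while $\iget[state]{\I}\notin\iwinreg[\alpha]{\I}{\iselectlike{X}{\I}{\{x\}}}$; since $\iselectlike{X}{\I}{\scomplement{W}}\subseteq\iselectlike{X}{\I}{\{x\}}$ by \rref{rem:projections}(\ref{case:select-amon}), \rref{lem:monotone} then contradicts bound effect for $W$.
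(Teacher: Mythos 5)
Your monotonicity direction, your singleton-plus-finite-union skeleton via \rref{rem:projections}(\ref{case:select-compose}), and your minimality paragraph coincide exactly with the paper's proof. The genuine gap is your plan for the countable limit, which you propose to close by proving downward continuity of \(\iwinreg[\alpha]{\I}{\cdot}\) along select-chains by structural induction on $\alpha$. The paper's proof contains \emph{no} structural induction on $\alpha$: it closes the family $\mathcal{M}$ (your $\mathcal{B}$, but with the bound effect property demanded for \emph{all} interpretations, states, and winning conditions) under countable unions directly, using only the defining property of $\boundvarsdef{\alpha}$ for the whole game, by iterating select-composition along $n$ and passing to the limit. This is forced, because your induction already dies at the base case of a game symbol $a$: there \(\iget[const]{\I}(a)\) is an arbitrary monotone map on \(\powerset{\linterpretations{\Sigma}{\allvars}}\) with no structure to recurse into, and per-interpretation downward continuity along select-chains is simply false for such maps. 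For instance, let \(\iget[const]{\I}(a)(Y)\) be the set of states \(\iget[state]{\I}\) with \(\iselectlike{Y}{\I}{F}\neq\emptyset\) for every finite $F$; then every singleton select preserves membership, yet for \(Y=\{\iget[state]{\It}_n : n\in\naturals\}\) with $\iget[state]{\It}_n=\iget[state]{\I}$ except at $x_{n+1}$, each finite stage is won while \(\iselectlike{Y}{\I}{\cupfold_n\{x_n\}}=\emptyset\) is not. The lemma survives for $a$ only because $\boundvarsdef{a}$ quantifies over \emph{all} interpretations (such interpretations force \(\boundvarsdef{a}=\allvars\), making the select trivial) --- information your continuity formulation discards and which structural recursion cannot recover once $a$ occurs inside a composite game.

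The compound cases do not ``follow directly'' either. For sequential composition the outer game must be handled along the decreasing chain \(\iwinreg[\beta]{\I}{\iselectlike{X}{\I}{V_n}}\), which is not a select-chain, so your restricted induction hypothesis does not apply; for \(\pdual{\beta}\), complementation turns downward continuity along a decreasing chain into \emph{upward} continuity along the increasing chain of complements, again outside your hypothesis. In the repetition case the verification of \(\iwinreg[\beta]{\I}{T}\subseteq T\) does not go through as sketched: $T$ contains, vacuously, every state disagreeing with \(\iget[state]{\I}\) on $V$, so \(T\not\subseteq\iwinreg[\prepeat{\beta}]{\I}{\iselectlike{X}{\I}{V}}\) and \rref{lem:monotone} cannot ``transport winning from $T$ down''; you would first have to shrink the intermediate condition via the bound effect property of $\beta$ relative to $V$, which needs \(V\cap\boundvarsdef{\beta}=\emptyset\), i.e., \(\boundvarsdef{\beta}\subseteq\boundvarsdef{\prepeat{\beta}}\) --- an unproven semantic claim, and semantic bound variables are not monotone under subgames (e.g., \(\boundvarsdef{\ptest{\lfalse};\beta}=\emptyset\) for every $\beta$, since that winning region is constantly empty), so it cannot be waved through. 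Your instinct that the passage from all finite stages to the countable union is the crux is sound --- the paper itself dispatches that step in a single line inside the closure argument for $\mathcal{M}$ --- but the repair must stay at the level of the whole game's semantic bound variables, uniformly in $\alpha$ and quantified over all interpretations as in the paper's $\mathcal{M}$; replacing it by a structural continuity induction is not a viable route.
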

\begin{proofatend}
Let $\mathcal{M}$ be the set of all sets $M\subseteq\allvars$ satisfying for all $\iget[const]{\I},X,\iget[state]{\I}$: 
\(\iwin[\alpha]{\I}{X}\) iff \(\iwin[\alpha]{\I}{\iselectlike{X}{\I}{M}}\).
By  \rref{lem:monotone},
\(\iwinreg[\alpha]{\I}{X} \supseteq \iwinreg[\alpha]{\I}{\iselectlike{X}{\I}{M}}\) as \(X \supseteq \iselectlike{X}{\I}{M}\).

\begin{compactenum}
\item If \(x\not\in\boundvarsdef{\alpha}\), then \(\{x\} \in \mathcal{M}\) directly by \rref{def:static-semantics}.
\item If \(M_i \in\mathcal{M}\) is a sequence of sets in $\mathcal{M}$, then \(\cupfold_{i\in\naturals} M_i \in \mathcal{M}\):
Assume that\\
\m{\iwin[\alpha]{\I}{X} = \iwinreg[\alpha]{\I}{\iselectlike{X}{\I}{\emptyset}}} by \rref{rem:projections}(\ref{case:select-extreme}).
Since
\(\iwin[\alpha]{\I}{\iselectlike{X}{\I}{\cupfold_{i<n} M_i}}\)
implies
\(\iwin[\alpha]{\I}{\iselectlike{\iselectlike{X}{\I}{\cupfold_{i<n} M_i}}{\I}{M_n}}\)
= \(\iwinreg[\alpha]{\I}{\iselectlike{X}{\I}{\cupfold_{i<n+1} M_i}}\) according to \rref{rem:projections}(\ref{case:select-compose}),
an induction on $n$ yields
\(\iwin[\alpha]{\I}{\iselectlike{X}{\I}{\cupfold_i M_i}}\).
\end{compactenum}
Thus, \(\scomplement{\boundvarsdef{\alpha}} \in \mathcal{M}\) as a (countable) union of \(\{x\}\) for all $x\not\in\boundvarsdef{\alpha}$.

No set \(V\not\supseteq\boundvarsdef{\alpha}\) has the bound effect property for $\alpha$, because there, then, is a variable $x\in\boundvarsdef{\alpha}\setminus V$, which implies there are \(\iget[const]{\I},X,\iget[state]{\I}\) such that
\(\iwinreg[\alpha]{\I}{X} \ni \iget[state]{\I} \not\in \iwinreg[\alpha]{\I}{\iselectlike{X}{\I}{\{x\}}}
\supseteq \iwinreg[\alpha]{\I}{\iselectlike{X}{\I}{\scomplement{V}}}\)
by \rref{lem:monotone},
as \(\iselectlike{X}{\I}{\{x\}} \supseteq \iselectlike{X}{\I}{\scomplement{V}}\) by \rref{rem:projections}(\ref{case:select-amon}),
because \(\{x\} \subseteq \scomplement{V}\).
\qedhere
\end{proofatend}

\noindent
All supersets \(V\supseteq\boundvarsdef{\alpha}\) have the bound effect property, as
\(\iwinreg[\alpha]{\I}{\iselectlike{X}{\I}{\scomplement{V}}} \supseteq \iwinreg[\alpha]{\I}{\iselectlike{X}{\I}{\scomplement{\boundvarsdef{\alpha}}}}\)
by \rref{rem:projections}(\ref{case:select-amon}) 
because \(\scomplement{V} \subseteq \scomplement{\boundvarsdef{\alpha}}\).
Other states that agree except on the bound variables share the same selection of the winning region:
if \(\iget[state]{\I}=\iget[state]{\Ialt}\) on $\scomplement{\boundvarsdef{\alpha}}$, then
  \(\iget[state]{\Ialt} \in \iwinreg[\alpha]{\I}{X}\) iff \(\iget[state]{\Ialt} \in \iwinreg[\alpha]{\I}{\iselectlike{X}{\I}{\scomplement{\boundvarsdef{\alpha}}}}\).

Since all supersets of the free variables have the coincidence property and all supersets of the bound variables have the bound effect property, algorithms that \emph{syntactically compute} supersets \text{FV} and \text{BV} of free and bound variables \cite[Lem.\,17]{DBLP:journals/jar/Platzer17} can be soundly augmented by
\(\freevars{\pdual{\alpha}} = \freevars{\alpha}\)
and
\(\boundvars{\pdual{\alpha}} = \boundvars{\alpha}\).

\section{Uniform Substitution}

The static semantics provides, in a modular way, what is needed to define the application \(\applyusubst{\sigma}{\phi}\) of uniform substitution $\sigma$ to \dGL formula $\phi$.
The \dGL axiomatization uses uniform substitutions that affect terms, formulas, and games, whose application \(\applyusubst{\sigma}{\phi}\) will be defined in \rref{def:usubst-admissible} using \rref{fig:usubst}.
A \dfn{uniform substitution} $\sigma$ is a mapping
from expressions of the
form \(f(\usarg)\) to terms $\applysubst{\sigma}{f(\usarg)}$,
from \(p(\usarg)\) to formulas $\applysubst{\sigma}{p(\usarg)}$,
\ifpredicationals
from \(\contextapp{C}{\uscarg}\) quantifier symbols to formulas $\applysubst{\sigma}{\contextapp{C}{\uscarg}}$,
\fi%
and from game symbols \(a\) to hybrid games $\applysubst{\sigma}{a}$.
Vectorial extensions are accordingly for other arities $k\geq0$.
Here $\usarg$ is a reserved function symbol of arity 0,
\ifpredicationals
and $\uscarg$ a reserved quantifier symbol of arity 0,
\fi%
marking the position where the respective argument, e.g., argument $\theta$ to $p(\usarg)$ in formula $p(\theta)$, will end up in the replacement $\applysubst{\sigma}{p(\usarg)}$ used for $p(\theta)$.

\begin{figure}[tb]
  \newcommand*{\implied}[1]{\textcolor{darkgray}{#1}}%
  \renewcommand{\mequiv}{=}%
  \begin{displaymath}
    \begin{array}{@{}rcll@{}}
    \applyusubst{\sigma}{\usubstgroup{x}} &=& x & \text{for variable $x\in\allvars$}\\
    \applyusubst{\sigma}{\usubstgroup{f(\theta)}} &=& (\applyusubst{\sigma}{f})(\applyusubst{\sigma}{\theta})
  \mdefeq \applyusubst{\{\usarg\mapsto\applyusubst{\sigma}{\theta}\}}{\applysubst{\sigma}{f(\usarg)}} &
  \text{for function symbol}~f%
  \\
  \applyusubst{\sigma}{\usubstgroup{\theta+\eta}} &=& \applyusubst{\sigma}{\theta} + \applyusubst{\sigma}{\eta}
  \\
  \applyusubst{\sigma}{\usubstgroup{\theta\cdot\eta}} &=& \applyusubst{\sigma}{\theta} \cdot \applyusubst{\sigma}{\eta}
  \\
  \applyusubst{\sigma}{\usubstgroup{\der{\theta}}} &=& \der{\applyusubst{\sigma}{\theta}} &\text{if $\sigma$ is $\allvars$-admissible for $\theta$}
  \\
  \hline
  \applyusubst{\sigma}{\usubstgroup{\theta\geq\eta}} &\mequiv& \applyusubst{\sigma}{\theta} \geq \applyusubst{\sigma}{\eta}\\
    \applyusubst{\sigma}{\usubstgroup{p(\theta)}} &\mequiv& (\applyusubst{\sigma}{p})(\applyusubst{\sigma}{\theta})
  \mdefequiv \applyusubst{\{\usarg\mapsto\applyusubst{\sigma}{\theta}\}}{\applysubst{\sigma}{p(\usarg)}} &
  \text{for predicate symbol}~p%
  \\
\ifpredicationals
  \applyusubst{\sigma}{\usubstgroup{\contextapp{C}{\phi}}} &\mequiv& \contextapp{\applyusubst{\sigma}{C}}{\applyusubst{\sigma}{\phi}}
  \mdefequiv \applyusubst{\{\uscarg\mapsto\applyusubst{\sigma}{\phi}\}}{\applysubst{\sigma}{\contextapp{C}{\uscarg}}} &
  \text{if $\sigma$ is $\allvars$-admissible for $\phi$, $C\in\replacees{\sigma}$}
  \\
\fi
    \applyusubst{\sigma}{\usubstgroup{\lnot\phi}} &\mequiv& \lnot\applyusubst{\sigma}{\phi}\\
    \applyusubst{\sigma}{\usubstgroup{\phi\land\psi}} &\mequiv& \applyusubst{\sigma}{\phi} \land \applyusubst{\sigma}{\psi}\\
    \applyusubst{\sigma}{\usubstgroup{\lexists{x}{\phi}}} &\mequiv& \lexists{x}{\applyusubst{\sigma}{\phi}} & \text{if $\sigma$ is $\{x\}$-admissible for $\phi$}\\
    \applyusubst{\sigma}{\usubstgroup{\ddiamond{\alpha}{\phi}}} &\mequiv& \ddiamond{\applyusubst{\sigma}{\alpha}}{\applyusubst{\sigma}{\phi}} & \text{if $\sigma$ is $\boundvarsdef{\applyusubst{\sigma}{\alpha}}$-admissible for $\phi$}
    \\
  \hline
    \applyusubst{\sigma}{\usubstgroup{a}} &\mequiv& \applysubst{\sigma}{a} &\text{for game symbol}~ a%
    \\
    \applyusubst{\sigma}{\usubstgroup{\pupdate{\umod{x}{\theta}}}} &\mequiv& \pupdate{\umod{x}{\applyusubst{\sigma}{\theta}}}\\
    \applyusubst{\sigma}{\usubstgroup{\pevolvein{\D{x}=\genDE{x}}{\ivr}}} &\mequiv&
    (\pevolvein{\D{x}=\applyusubst{\sigma}{\genDE{x}}}{\applyusubst{\sigma}{\ivr}}) & \text{if $\sigma$ is $\{x,\D{x}\}$-admissible for $\genDE{x},\ivr$}\\
    \applyusubst{\sigma}{\usubstgroup{\ptest{\ivr}}} &\mequiv& \ptest{\applyusubst{\sigma}{\ivr}}\\
    \applyusubst{\sigma}{\usubstgroup{\pchoice{\alpha}{\beta}}} &\mequiv& \pchoice{\applyusubst{\sigma}{\alpha}} {\applyusubst{\sigma}{\beta}}\\
    \applyusubst{\sigma}{\usubstgroup{\alpha;\beta}} &\mequiv& \applyusubst{\sigma}{\alpha}; \applyusubst{\sigma}{\beta} &\text{if $\sigma$ is $\boundvarsdef{\applyusubst{\sigma}{\alpha}}$-admissible for $\beta$}\\
    \applyusubst{\sigma}{\usubstgroup{\prepeat{\alpha}}} &\mequiv& \prepeat{(\applyusubst{\sigma}{\alpha})} &\text{if $\sigma$ is $\boundvarsdef{\applyusubst{\sigma}{\alpha}}$-admissible for $\alpha$}\\
    \applyusubst{\sigma}{\usubstgroup{\pdual{\alpha}}} &\mequiv& \pdual{(\applyusubst{\sigma}{\alpha})} 
    \end{array}%
  \end{displaymath}%
  \caption{Recursive application of uniform substitution~$\sigma$}%
  \index{substitution!uniform|textbf}%
  \label{fig:usubst}%
\end{figure}%

\begin{definition}[Admissible uniform substitution] \label{def:usubst-admissible}
  \index{admissible|see{substitution, admissible}}
  A uniform substitution~$\sigma$ is \emph{$U$-admissible for $\phi$} (or $\theta$ or $\alpha$, respectively) with respect to the variables $U\subseteq\allvars$ iff
  \(\freevarsdef{\restrict{\sigma}{\intsigns{\phi}}}\cap U=\emptyset\),
  where \({\restrict{\sigma}{\intsigns{\phi}}}\) is the \emph{restriction} of $\sigma$ that only replaces symbols that occur in $\phi$,
  and
  \(\freevarsdef{\sigma}=\cupfold_{f\ignore{\in\replacees{\sigma}}} \freevarsdef{\applysubst{\sigma}{f(\usarg)}} \cup \cupfold_{p\ignore{\in\replacees{\sigma}}} \freevarsdef{\applysubst{\sigma}{p(\usarg)}}\)
  are the \emph{free variables} that $\sigma$ introduces. 
  A uniform substitution~$\sigma$ is \emph{admissible for $\phi$} ($\theta$ or $\alpha$, respectively) iff the bound variables $U$ of each operator of $\phi$ are not free in the substitution on its arguments, i.e., $\sigma$ is $U$-admissible.
  These admissibility conditions are listed in \rref{fig:usubst}, which defines the result $\applyusubst{\sigma}{\phi}$ of applying $\sigma$ to $\phi$.
\end{definition}

\noindent
The remainder of this section proves soundness of uniform substitution for \dGL.
\emph{All subsequent uses of uniform substitutions are required to be admissible.}

\subsection{Uniform Substitution Lemmas}

Uniform substitution lemmas equate the syntactic effect that a uniform substitution $\sigma$ has on a syntactic expression in a state $\iget[state]{\I}$ and interpretation $\iget[const]{\I}$ with the semantic effect that the switch to the adjoint interpretation $\iget[const]{\Ia}$ has on the original expression.
Adjoints make it possible to capture in semantics the effect that a uniform substitution has on the syntax.

Let \m{\iget[const]{\imodif[const]{\I}{\,\usarg}{d}}} denote the interpretation that agrees with interpretation~$\iget[const]{\I}$ except for the interpretation of arity 0 function symbol~$\usarg$ which is changed to~\m{d\in\reals}.
\ifpredicationals
Correspondingly \m{\iget[const]{\imodif[const]{\I}{\uscarg}{R}}} denotes the interpretation that agrees with $\iget[const]{\I}$ except that quantifier symbol $\uscarg$ is $R\subseteq\linterpretations{\Sigma}{\allvars}$.
\fi

\begin{definition}[Substitution adjoints] \label{def:adjointUsubst}
\def\Ialta{\iadjointSubst{\sigma}{\Ialt}}%
The \emph{adjoint} to substitution $\sigma$ is the operation that maps $\iportray{\I}$ to the \emph{adjoint} interpretation $\iget[const]{\Ia}$ in which the interpretation of each function symbol $f\ignore{\in\replacees{\sigma}}$, predicate symbol $p\ignore{\in\replacees{\sigma}}$,
\ifpredicationals
quantifier symbol $C\in\replacees{\sigma}$,
\fi%
and game symbol $a\ignore{\in\replacees{\sigma}}$ are modified according to $\sigma$ (it is enough to consider those that $\sigma$ changes):
\begin{align*}
  \iget[const]{\Ia}(f) &: \reals\to\reals;\, d\mapsto\ivaluation{\imodif[const]{\I}{\,\usarg}{d}}{\applysubst{\sigma}{f}(\usarg)}\\
  \iget[const]{\Ia}(p) &= \{d\in\reals \with \imodels{\imodif[const]{\I}{\,\usarg}{d}}{\applysubst{\sigma}{p}(\usarg)}\}
  \\
\ifpredicationals
  \iget[const]{\Ia}(C) &: \powerset{\linterpretations{\Sigma}{\allvars}}\to\powerset{\linterpretations{\Sigma}{\allvars}};\, R\mapsto\imodel{\imodif[const]{\I}{\,\uscarg}{R}}{\applysubst{\sigma}{\contextapp{C}{\uscarg}}}
  \\
\fi
  \iget[const]{\Ia}(a) &: \powerset{\linterpretations{\Sigma}{\allvars}}\to\powerset{\linterpretations{\Sigma}{\allvars}};\, X\mapsto\iwinreg[\applysubst{\sigma}{a}]{\I}{X}
\end{align*}
\end{definition}
\begin{corollary}[Admissible adjoints] \label{cor:adjointUsubst}
\def\Ialta{\iadjointSubst{\sigma}{\Ialt}}%
If \(\iget[state]{\I}=\iget[state]{\It}\) on \(\freevarsdef{\sigma}\),
then \(\iget[const]{\Ia}=\iget[const]{\Ita}\).
If \(\iget[state]{\I}=\iget[state]{\It}\) on $\scomplement{U}$ and $\sigma$ is $U$-admissible for $\theta$ (or $\phi$ or $\alpha$, respectively), then
\begin{align*}
  \ivalues{\Ia}{\theta} &= \ivalues{\Ita}{\theta}
  ~\text{i.e.,}~
  \ivaluation{\Iaz}{\theta} = \ivaluation{\Itaz}{\theta} ~\text{for all states}~\iget[state]{\Iaz} \in \linterpretations{\Sigma}{\allvars}
  \\
  \imodel{\Ia}{\phi} &= \imodel{\Ita}{\phi}\\
  \iaccess[\alpha]{\Ia} &= \iaccess[\alpha]{\Ita}
  ~\text{i.e.,}~
  \iwinreg[\alpha]{\Ia}{X} = \iwinreg[\alpha]{\Ita}{X} ~\text{for all sets}~X \subseteq \linterpretations{\Sigma}{\allvars}
\end{align*}
\end{corollary}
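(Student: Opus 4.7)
The plan is to deduce both statements from the coincidence lemmas for terms, formulas, and hybrid games (Lemmas~\ref{lem:coincidence-term}, \ref{lem:coincidence}, \ref{lem:coincidence-HG}) applied symbol-by-symbol to the substitution replacements used in \rref{def:adjointUsubst}.

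First I would prove the global statement \(\iget[const]{\Ia}=\iget[const]{\Ita}\) under the assumption \(\iget[state]{\I}=\iget[state]{\It}\) on \(\freevarsdef{\sigma}\). Pick any symbol that \(\sigma\) changes. For a function symbol \(f\), \rref{def:adjointUsubst} gives
\(\iget[const]{\Ia}(f)(d) = \ivaluation{\imodif[const]{\I}{\,\usarg}{d}}{\applysubst{\sigma}{f(\usarg)}}\),
and by construction \(\freevarsdef{\applysubst{\sigma}{f(\usarg)}} \subseteq \freevarsdef{\sigma}\); since the only auxiliary symbol \(\usarg\) is interpreted identically in \(\imodif[const]{\I}{\,\usarg}{d}\) and \(\imodif[const]{\It}{\,\usarg}{d}\), coincidence for terms (\rref{lem:coincidence-term}) yields \(\iget[const]{\Ia}(f)(d) = \iget[const]{\Ita}(f)(d)\). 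The same argument, via \rref{lem:coincidence}, handles predicate symbols. For game symbols \(a\), \(\iget[const]{\Ia}(a)(X) = \iwinreg[\applysubst{\sigma}{a}]{\I}{X}\) with \(\freevarsdef{\applysubst{\sigma}{a}} \subseteq \freevarsdef{\sigma}\), and coincidence for games (\rref{lem:coincidence-HG}) gives \(\iwinreg[\applysubst{\sigma}{a}]{\I}{X} = \iwinreg[\applysubst{\sigma}{a}]{\It}{X}\) for every \(X\). Thus \(\iget[const]{\Ia}\) and \(\iget[const]{\Ita}\) agree on all symbols \(\sigma\) touches, and they trivially agree on the rest, so \(\iget[const]{\Ia}=\iget[const]{\Ita}\).

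Next I would derive the restricted statements about \(\theta\), \(\phi\), \(\alpha\). The key observation is that the semantic value of \(\theta\) under \(\Ia\) only depends on the interpretation of symbols in \(\intsigns{\theta}\), so only the restricted substitution \(\restrict{\sigma}{\intsigns{\theta}}\) matters, and its introduced free variables satisfy \(\freevarsdef{\restrict{\sigma}{\intsigns{\theta}}}\cap U=\emptyset\) by \(U\)-admissibility (\rref{def:usubst-admissible}); that is, \(\freevarsdef{\restrict{\sigma}{\intsigns{\theta}}}\subseteq\scomplement{U}\). Since \(\iget[state]{\I}=\iget[state]{\It}\) on \(\scomplement{U}\), in particular they agree on \(\freevarsdef{\restrict{\sigma}{\intsigns{\theta}}}\). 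Applying the first part to \(\restrict{\sigma}{\intsigns{\theta}}\) then yields \(\iget[const]{\Ia} = \iget[const]{\Ita}\) on all symbols in \(\intsigns{\theta}\). Consequently, for any chosen state \(\iget[state]{\Iaz}\in\linterpretations{\Sigma}{\allvars}\), the interpretations \(\iconcat[state=\mu]{\Ia}\) and \(\iconcat[state=\mu]{\Ita}\) agree on \(\intsigns{\theta}\) (and have identical state), so \rref{lem:coincidence-term} gives \(\ivaluation{\Iaz}{\theta} = \ivaluation{\Itaz}{\theta}\). The formula case is identical using \rref{lem:coincidence}, and the game case uses \rref{lem:coincidence-HG} to conclude \(\iwinreg[\alpha]{\Ia}{X} = \iwinreg[\alpha]{\Ita}{X}\) for every \(X\).

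The only potentially subtle step is the game case, because coincidence for games (\rref{lem:coincidence-HG}) is phrased in terms of projected winning regions \(\restrictto{X}{V}\) rather than arbitrary \(X\). However, since the hypothesis of \rref{lem:coincidence-HG} supplies \emph{all} free symbols of \(\alpha\) already identified between the two interpretations, one can instantiate with \(V=\allvars\) and use \(\restrictto{X}{\allvars}=X\) from \rref{rem:projections}(\ref{case:restrict-extreme}) to recover the unrestricted equality \(\iwinreg[\alpha]{\Ia}{X} = \iwinreg[\alpha]{\Ita}{X}\) for every \(X\subseteq\linterpretations{\Sigma}{\allvars}\). Otherwise the argument is a clean symbol-by-symbol unfolding of the adjoint definition against the three coincidence lemmas.
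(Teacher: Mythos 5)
Your overall route is the paper's own: establish \(\iget[const]{\Ia}=\iget[const]{\Ita}\) symbol by symbol via the coincidence lemmas, then for the second half use \(U\)-admissibility to get \(\freevarsdef{\restrict{\sigma}{\intsigns{\theta}}}\subseteq\scomplement{U}\), conclude agreement of the adjoints on the relevant signature, and finish with the coincidence lemmas; in particular your instantiation of \rref{lem:coincidence-HG} with identical states, \(V=\allvars\), and \(\restrictto{X}{\allvars}=X\) from \rref{rem:projections}(\ref{case:restrict-extreme}) is exactly how the paper recovers the unprojected set equality \(\iwinreg[\alpha]{\Ia}{X}=\iwinreg[\alpha]{\Ita}{X}\).

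There is, however, one genuinely wrong step: your game-symbol case of the first claim rests on the inclusion \(\freevarsdef{\applysubst{\sigma}{a}}\subseteq\freevarsdef{\sigma}\), which is false. By \rref{def:usubst-admissible}, \(\freevarsdef{\sigma}\) is the union of the free variables of the function- and predicate-symbol replacements \emph{only}; the replacements of game symbols are deliberately excluded. That exclusion is load-bearing elsewhere, so you cannot repair the step by amending the definition: \rref{thm:usubst-rule} demands \(\freevarsdef{\sigma}=\emptyset\) yet must still permit substituting game symbols by games with free variables (e.g., \(c\mapsto\ptest{\varphi}\) in the completeness argument for \rref{thm:dGL-complete}). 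The correct observation, and the one the paper makes, is that no coincidence argument is needed for game symbols at all: \(\iget[const]{\Ia}(a)\) is the function \(X\mapsto\iwinreg[\applysubst{\sigma}{a}]{\I}{X}\), and the winning-region semantics of \rref{def:HG-semantics} depends only on the interpretation \(\iget[const]{\I}\), never on the state, so \(\iget[const]{\Ia}(a)=\iget[const]{\Ita}(a)\) holds unconditionally. Your appeal to \rref{lem:coincidence-HG} here is also formally off target: that lemma compares membership of two \emph{states} in projected regions \(\restrictto{X}{V}\), so by itself it would not deliver set equality of winning regions for arbitrary \(X\) --- and it is moot anyway since \(\iget[const]{\I}\) and \(\iget[const]{\It}\) are literally the same interpretation. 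Note that this same unconditional agreement is also what supplies agreement on the game symbols in \(\intsigns{\alpha}\) that your second half needs before invoking \rref{lem:coincidence-HG}. With that one step replaced, your argument goes through and coincides with the paper's proof; a minor omission is the paper's preliminary remark that \(\iget[const]{\Ia}(f)\) is well defined, i.e., smooth, because the substitute term has smooth values.
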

\begin{proofatend}
$\iget[const]{\Ia}$ is well-defined, as $\iget[const]{\Ia}(f)$ is a smooth function since its substitute term \({\applysubst{\sigma}{f}(\usarg)}\) has smooth values.
First, \(\iget[const]{\Ia}(a)(X) = \iwinreg[\applysubst{\sigma}{a}]{\I}{X} = \iget[const]{\Ita}(a)(X)\) holds for all $X\subseteq\linterpretations{\Sigma}{\allvars}\) because the adjoint to $\sigma$ for $\iportray{\I}$ in the case of game symbols is independent of $\iget[state]{\Ia}$ (games have access to the entire state at runtime).
\ifpredicationals
Likewise \(\iget[const]{\Ia}(C) = \iget[const]{\Ita}(C)\) for quantifier symbols, because the adjoint is independent of $\iget[state]{\Ia}$ for quantifier symbols.
\fi%
By \rref{lem:coincidence-term},
\(\ivaluation{\imodif[const]{\I}{\,\usarg}{d}}{\applysubst{\sigma}{f}(\usarg)}
= \ivaluation{\imodif[const]{\It}{\,\usarg}{d}}{\applysubst{\sigma}{f}(\usarg)}\) when \(\iget[state]{\I}=\iget[state]{\It}\) on \(\freevarsdef{\applysubst{\sigma}{f}(\usarg)} \subseteq \freevarsdef{\sigma}\).
Also \(\imodels{\imodif[const]{\I}{\,\usarg}{d}}{\applysubst{\sigma}{p}(\usarg)}\)
iff \(\imodels{\imodif[const]{\It}{\,\usarg}{d}}{\applysubst{\sigma}{p}(\usarg)}\)
by \rref{lem:coincidence} when \(\iget[state]{\I}=\iget[state]{\It}\) on \(\freevarsdef{\applysubst{\sigma}{p}(\usarg)}\) $\subseteq$ \(\freevarsdef{\sigma}\).
Thus, \(\iget[const]{\Ia}=\iget[const]{\Ita}\) when $\iget[state]{\I}=\iget[state]{\It}$ on $\freevarsdef{\sigma}$.

If $\sigma$ is $U$-admissible for $\phi$ (or $\theta$ or $\alpha$), then
  \(\freevarsdef{\applysubst{\sigma}{f(\usarg)}}\cap U=\emptyset\), so
  \(\scomplement{U}\supseteq\freevarsdef{\applysubst{\sigma}{f(\usarg)}}\)
  for every function symbol $f\in\intsigns{\phi}$ (or $\theta$ or $\alpha$) and likewise for predicate symbols $p\in\intsigns{\phi}$.
  Since \(\iget[state]{\I}=\iget[state]{\It}\) on $\scomplement{U}$ was assumed,
  \(\iget[const]{\Ia}=\iget[const]{\Ita}\) on the function and predicate symbols in $\intsigns{\phi}$ (or $\theta$ or $\alpha$).
  Finally \(\iget[const]{\Ia}=\iget[const]{\Ita}\) on $\intsigns{\phi}$ (or $\intsigns{\theta}$ respectively) implies that
  \(\imodel{\Ita}{\phi}=\imodel{\Ia}{\phi}\) by \rref{lem:coincidence}
  (since \(\imodels{\Itaz}{\phi}\) iff \(\imodels{\Iaz}{\phi}\) holds for all $\iget[state]{\Itaz}$ which trivially satisfy $\iget[state]{\Iaz}=\iget[state]{\Itaz}$ on $\freevarsdef{\phi}$)
  and that \(\ivalues{\Ia}{\theta} = \ivalues{\Ita}{\theta}\) by \rref{lem:coincidence-term}, respectively.
  Similarly, \(\iget[const]{\Ia}=\iget[const]{\Ita}\) on $\intsigns{\alpha}$ implies by \rref{lem:coincidence-HG} that
  \(\iwinreg[\alpha]{\Ia}{X} = \iwinreg[\alpha]{\Ita}{X}\),
  because it implies:
  \(\iwin[\alpha]{\Iaz}{X} = \iwinreg[\alpha]{\Iaz}{\restrictto{X}{\allvars}}\)
  iff
  \(\iwin[\alpha]{\Itaz}{\restrictto{X}{\allvars}} = \iwinreg[\alpha]{\Itaz}{X}\),
  for all $\iget[state]{\Itaz}$ which satisfy $\iget[state]{\Iaz}=\iget[state]{\Itaz}$ on $\allvars\supseteq\freevarsdef{\alpha}$.
  This uses \(\restrictto{X}{\allvars}=X\) from \rref{rem:projections}(\ref{case:restrict-extreme}).
\qedhere
\end{proofatend}

Substituting equals for equals is sound by the compositional semantics of \dL.
The more general uniform substitutions are still sound, because the semantics of uniform substitutes of expressions agrees with the semantics of the expressions themselves in the adjoint interpretations.
The semantic modification of adjoint interpretations has the same effect as the syntactic uniform substitution.

\begin{lemma}[Uniform substitution for terms] \label{lem:usubst-term}
The uniform substitution $\sigma$ and its adjoint interpretation $\iportray{\Ia}$ for $\iportray{\I}$ have the same semantics for all \emph{terms} $\theta$:
\[\ivaluation{\I}{\applyusubst{\sigma}{\theta}} = \ivaluation{\Ia}{\theta}\]
\end{lemma}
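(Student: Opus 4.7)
The plan is to prove this by structural induction on the term $\theta$, following the recursive definition of uniform substitution in \rref{fig:usubst} and relating each syntactic case to the corresponding clause of the adjoint interpretation in \rref{def:adjointUsubst}.

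For the base cases, the variable case $\theta \mequiv x$ is immediate since $\applyusubst{\sigma}{x} = x$ and the adjoint $\iget[const]{\Ia}$ only modifies the interpretation of symbols, not the state, so both sides evaluate to $\iget[state]{\I}(x)$. The polynomial cases $\theta+\eta$ and $\theta\cdot\eta$ are direct applications of the inductive hypothesis componentwise, since $\applyusubst{\sigma}{\theta+\eta} = \applyusubst{\sigma}{\theta} + \applyusubst{\sigma}{\eta}$ and the semantics of $+$ and $\cdot$ is the same under $\iportray{\I}$ and $\iportray{\Ia}$.

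The main case is the function symbol $\theta \mequiv f(\eta)$, where $\applyusubst{\sigma}{f(\eta)} = \applyusubst{\{\usarg\mapsto\applyusubst{\sigma}{\eta}\}}{\applysubst{\sigma}{f(\usarg)}}$. Here I would first let $d \mdefeq \ivaluation{\I}{\applyusubst{\sigma}{\eta}}$, which by the inductive hypothesis equals $\ivaluation{\Ia}{\eta}$. The key observation is that syntactically substituting $\usarg$ by a term whose value is $d$ has the same effect on $\applysubst{\sigma}{f(\usarg)}$ as semantically reinterpreting the dummy symbol $\usarg$ to $d$; this is a small auxiliary substitution lemma for the reserved marker $\usarg$, which can be seen by structural induction on $\applysubst{\sigma}{f(\usarg)}$ (noting that $\usarg$ has arity zero and occurs only in argument position). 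Invoking this gives $\ivaluation{\I}{\applyusubst{\sigma}{f(\eta)}} = \ivaluation{\imodif[const]{\I}{\,\usarg}{d}}{\applysubst{\sigma}{f(\usarg)}}$, which by \rref{def:adjointUsubst} is exactly $\iget[const]{\Ia}(f)(d) = \iget[const]{\Ia}(f)(\ivaluation{\Ia}{\eta}) = \ivaluation{\Ia}{f(\eta)}$.

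The remaining case is the differential $\der{\theta}$, where the side condition is that $\sigma$ be $\allvars$-admissible for $\theta$. This condition forces $\freevarsdef{\restrict{\sigma}{\intsigns{\theta}}} = \emptyset$, so by \rref{cor:adjointUsubst} the adjoint $\iget[const]{\Ia}$ does not depend on the state in any variable that the differential-form semantics sums over; equivalently, $\ivaluation{\Iaz}{\theta}$ is a smooth function of $\iget[state]{\Iaz}$ whose partial derivatives agree with those of $\ivaluation{\I}{\applyusubst{\sigma}{\theta}}$ by the inductive hypothesis applied pointwise. Then substituting into the differential-form expression from the footnote and using the IH term-by-term yields $\ivaluation{\I}{\der{\applyusubst{\sigma}{\theta}}} = \ivaluation{\Ia}{\der{\theta}}$. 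The main obstacle is precisely this differential case: justifying that the inductive hypothesis, which is a pointwise equality of values, can be propagated through the partial derivatives $\Dp[x]{\ivaluation{\I}{\applyusubst{\sigma}{\theta}}}$ and $\Dp[x]{\ivaluation{\Ia}{\theta}}$, which is exactly what the $\allvars$-admissibility side condition is designed to license.
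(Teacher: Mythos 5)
Your proposal is correct, but it takes a genuinely different route from the paper: the paper's entire proof of this lemma is a transfer argument that simply cites the corresponding \dL result \cite[Lem.\,23]{DBLP:journals/jar/Platzer17}, observing that the term syntax, term semantics, and the coincidence lemma for terms (\rref{lem:coincidence-term}) are unchanged in \dGL --- games only affect the formula and game levels, so nothing in the term induction is touched by the move to winning-region semantics. What you wrote out is essentially a reconstruction of that cited \dL proof, and all four of your cases match it in substance; the paper's citation buys brevity, while your explicit induction buys self-containedness. One organizational point worth knowing: your ``small auxiliary substitution lemma'' for the marker $\usarg$ is, in the official development, not a separate lemma but an instance of the main lemma itself under a lexicographic induction on the pair $(\sigma,\theta)$ --- the marker substitution $\{\usarg\mapsto\applyusubst{\sigma}{\eta}\}$ is structurally simpler than $\sigma$ (it replaces only the arity-0 function symbol $\usarg$), and its adjoint for $\iportray{\I}$ is exactly $\iget[const]{\imodif[const]{\I}{\,\usarg}{d}}$ with \(d=\ivaluation{\I}{\applyusubst{\sigma}{\eta}}\). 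This matters because your auxiliary lemma is not quite as small as your parenthetical suggests: if $\usarg$ occurs under a differential $\der{\cdot}$ inside $\applysubst{\sigma}{f(\usarg)}$, then the marker substitution must itself be $\allvars$-admissible per \rref{fig:usubst} (forcing \(\freevarsdef{\applyusubst{\sigma}{\eta}}=\emptyset\) for the application to be defined at all), so its proof needs the same admissibility machinery as the main lemma --- it \emph{is} the main lemma for a simpler substitution, which is precisely what the lexicographic setup packages cleanly. Your differential case is right, including the key point that $\allvars$-admissibility makes the adjoint state-independent on the symbols of $\intsigns{\theta}$ (via \rref{cor:adjointUsubst}), so the inductive hypothesis holds as an equality of functions of the state and may legitimately be differentiated before being summed against the values of the differential variables.
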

\begin{proofatend}
The proof follows from \dL \cite[Lem.\,23]{DBLP:journals/jar/Platzer17}, since the term semantics and the coincidence lemmas for terms that the proof is based on are the same in \dGL. \qedhere
\end{proofatend}

\noindent
The uniform substitute of a formula is true in an interpretation iff the formula itself is true in its adjoint interpretation.
Uniform substitution lemmas are proved by simultaneous induction, since formulas and games are mutually recursive.

\begin{lemma}[Uniform substitution for formulas] \label{lem:usubst}
The uniform substitution $\sigma$ and its adjoint interpretation $\iportray{\Ia}$ for $\iportray{\I}$ have the same semantics for all \emph{formulas} $\phi$:
\[\imodels{\I}{\applyusubst{\sigma}{\phi}} ~\text{iff}~ \imodels{\Ia}{\phi}\]
\end{lemma}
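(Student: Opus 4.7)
The plan is to prove this by structural induction on $\phi$, carried out simultaneously with a companion statement for hybrid games of the form $\iwinreg[\applyusubst{\sigma}{\alpha}]{\I}{X} = \iwinreg[\alpha]{\Ia}{X}$, as \rref{def:dGL-formula} and \rref{def:dGL-HG} make formulas and games mutually recursive through the $\ddiamond{\alpha}{\phi}$ clause on the formula side and through the $\ptest{\ivr}$ and $\pevolvein{\D{x}=\genDE{x}}{\ivr}$ clauses on the game side. Throughout, \rref{def:usubst-admissible} guarantees that whichever admissibility hypothesis the clause in \rref{fig:usubst} requires is available.

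For the atomic cases I would handle $\theta\geq\eta$ and $p(\theta_1,\dots,\theta_k)$ by reducing to \rref{lem:usubst-term} on the term arguments; for predicates, one additionally unfolds \rref{def:adjointUsubst}, which defines $\iget[const]{\Ia}(p)$ via the modification of the reserved symbol $\usarg$, so that $\imodels{\I}{\applyusubst{\sigma}{p(\theta)}}$ unfolds to membership of $\ivaluation{\I}{\applyusubst{\sigma}{\theta}}=\ivaluation{\Ia}{\theta}$ in $\iget[const]{\Ia}(p)$, which is $\imodels{\Ia}{p(\theta)}$. The Boolean cases $\lnot\phi$ and $\phi\land\psi$ follow immediately from the induction hypothesis and the compositional semantics in \rref{def:dGL-semantics}.

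The quantifier case $\lexists{x}{\phi}$ is the first place admissibility is essential. For each witness $r\in\reals$, the induction hypothesis applied at state $\iget[state]{\imodif[state]{\I}{x}{r}}$ gives $\imodels{\imodif[state]{\I}{x}{r}}{\applyusubst{\sigma}{\phi}}$ iff $\imodels{\iadjointSubst{\sigma}{\imodif[state]{\I}{x}{r}}}{\phi}$. Since $\sigma$ is $\{x\}$-admissible for $\phi$ and $\iget[state]{\I}$ and $\iget[state]{\imodif[state]{\I}{x}{r}}$ agree on $\scomplement{\{x\}}$, \rref{cor:adjointUsubst} yields $\imodel{\iadjointSubst{\sigma}{\imodif[state]{\I}{x}{r}}}{\phi}=\imodel{\imodif[state]{\Ia}{x}{r}}{\phi}$, which is exactly what is needed to match the semantic clause for $\lexists{x}{\phi}$ in the adjoint interpretation. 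The modality case $\ddiamond{\alpha}{\phi}$ proceeds analogously, but using $U=\boundvarsdef{\applyusubst{\sigma}{\alpha}}$ as the admissible set: the games part of the induction gives $\iwinreg[\applyusubst{\sigma}{\alpha}]{\I}{\imodel{\I}{\applyusubst{\sigma}{\phi}}}=\iwinreg[\alpha]{\Ia}{\imodel{\I}{\applyusubst{\sigma}{\phi}}}$, and the formula IH plus \rref{cor:adjointUsubst} lets one rewrite the winning condition as $\imodel{\Ia}{\phi}$ even though the adjoint may in principle depend on the state that changes along the game.

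I expect the main obstacle to lie in the games companion, specifically the clauses that change the state: the assignment, differential-equation, sequential, repetition, and dual cases. Unlike in \dL, the semantics is not a reachability relation but a winning-region operator, and in the repetition case $\iwinreg[\prepeat{\alpha}]{\I}{X}$ is defined by a transfinite least fixpoint (\rref{def:HG-semantics}). One must show that the fixpoint built over $\applyusubst{\sigma}{\alpha}$ in $\iportray{\I}$ coincides with the fixpoint built over $\alpha$ in $\iportray{\Ia}$, which requires the admissibility hypothesis $\boundvarsdef{\applyusubst{\sigma}{\alpha}}$-admissible for $\alpha$ to persist across iterations; the argument is that along any state $\iget[state]{\It}$ lying in an iterate, \rref{lem:bound} confines $\iget[state]{\It}$ to agreement with $\iget[state]{\I}$ on $\scomplement{\boundvarsdef{\applyusubst{\sigma}{\alpha}}}$, so \rref{cor:adjointUsubst} keeps $\iget[const]{\Ita}=\iget[const]{\Ia}$, and the two fixpoint sequences collapse. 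The sequential case is subtler still because the intermediate winning region $\iwinreg[\beta]{\I}{X}$ must be rewritten via the induction hypothesis for $\beta$ before applying it for $\alpha$, which again needs $\boundvarsdef{\applyusubst{\sigma}{\alpha}}$-admissibility of $\sigma$ for $\beta$ and \rref{lem:coincidence-HG} to preserve the equality after passing from $\iget[const]{\Ia}$ to $\iget[const]{\Ita}$ at states changed by the first half of the sequential game.
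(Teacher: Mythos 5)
Your overall architecture is the paper's: simultaneous induction with a games companion (\rref{lem:usubst-HG}), \rref{lem:usubst-term} for the atomic cases, \rref{cor:adjointUsubst} for the binding cases, and a transfinite fixpoint induction for repetition. But there is a genuine gap in your induction setup. You announce plain structural induction on $\phi$, and in the predicate case you claim that \(\imodels{\I}{\applyusubst{\sigma}{p(\theta)}}\) ``unfolds'' to \(\ivaluation{\Ia}{\theta}\in\iget[const]{\Ia}(p)\). It does not: by \rref{fig:usubst}, \(\applyusubst{\sigma}{\usubstgroup{p(\theta)}}\) is \(\applyusubst{\{\usarg\mapsto\applyusubst{\sigma}{\theta}\}}{\applysubst{\sigma}{p(\usarg)}}\), so the left-hand side is the truth in \(\iget[const]{\I},\iget[state]{\I}\) of a \emph{syntactic} substitution instance of the replacement formula \(\applysubst{\sigma}{p(\usarg)}\), whereas membership in \(\iget[const]{\Ia}(p)\) is, by \rref{def:adjointUsubst}, truth of \(\applysubst{\sigma}{p(\usarg)}\) in the \emph{semantically} modified interpretation \(\iget[const]{\imodif[const]{\I}{\,\usarg}{d}}\) with \(d=\ivaluation{\I}{\applyusubst{\sigma}{\theta}}\). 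Bridging these two is an instance of the very lemma being proved, applied to \(\applysubst{\sigma}{p(\usarg)}\) --- a formula that is not a subformula of \(p(\theta)\) and may be arbitrarily large, so structural induction on $\phi$ supplies no induction hypothesis for it. The paper resolves this by inducting lexicographically on the structure of $\sigma$ first and $\phi$ second: the inner substitution \(\{\usarg\mapsto\applyusubst{\sigma}{\theta}\}\) is structurally simpler than $\sigma$ (it replaces only the arity-0 function symbol $\usarg$, no predicate symbols), so the induction hypothesis applies to the bigger formula under the simpler substitution. Without that measure, your recursion is not well-founded at exactly this case.

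A secondary looseness: in the \(\ddiamond{\alpha}{\phi}\) case you say the induction hypothesis plus \rref{cor:adjointUsubst} ``lets one rewrite the winning condition'' \(\imodel{\I}{\applyusubst{\sigma}{\phi}}\) as \(\imodel{\Ia}{\phi}\). These sets are not equal in general: the induction hypothesis equates \(\imodels{\It}{\applyusubst{\sigma}{\phi}}\) with truth in the adjoint taken at \(\iget[state]{\It}\), not at \(\iget[state]{\I}\), and \rref{cor:adjointUsubst} only identifies the two adjoints on states agreeing with \(\iget[state]{\I}\) on \(\scomplement{\boundvarsdef{\applyusubst{\sigma}{\alpha}}}\). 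The correct move, which the paper makes and which you do gesture at in the repetition discussion, is to first apply \rref{lem:bound} on \emph{both} sides to replace each winning condition \(X\) by its downward projection \(\iselectlike{X}{\I}{\scomplement{\boundvarsdef{\applyusubst{\sigma}{\alpha}}}}\), and only then prove the two projected conditions equal pointwise via the induction hypothesis and \rref{cor:adjointUsubst}; the identical two-step pattern is needed for sequential composition and at the successor-ordinal step of the repetition case. With the lexicographic measure added and the modality and composition cases rephrased through these projections, your proof coincides with the paper's.
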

\ifkeepproof
\else
\begin{proof}
The proof is by structural induction on $\phi$ and the structure of $\sigma$, simultaneously with \rref{lem:usubst-HG}.
It is in \rref{app:proofs} with this case for modalities:
\begin{compactenum}
\addtocounter{enumi}{5}
\item
  \(\imodels{\I}{\applyusubst{\sigma}{\usubstgroup{\ddiamond{\alpha}{\phi}}}}\)
  iff \(\imodels{\I}{\ddiamond{\applyusubst{\sigma}{\alpha}}{\applyusubst{\sigma}{\phi}}}\)
  = \(\iwinreg[\applyusubst{\sigma}{\alpha}]{\I}{\imodel{\I}{\applyusubst{\sigma}{\phi}}}\)
  (provided $\sigma$ is $\boundvarsdef{\applyusubst{\sigma}{\alpha}}$-admissible for $\phi$)
  iff (by \rref{lem:bound})
  \(\iwin[\applyusubst{\sigma}{\alpha}]{\I}{\iselectlike{\imodel{\I}{\applyusubst{\sigma}{\phi}}}{\I}{\scomplement{\boundvarsdef{\applyusubst{\sigma}{\alpha}}}}}\).
  
  Starting conversely:
  \(\imodels{\Ia}{\ddiamond{\alpha}{\phi}}\)
  = \(\iwinreg[\alpha]{\Ia}{\imodel{\Ia}{\phi}}\)
  iff (by \rref{lem:usubst-HG})
  \(\iwin[\applyusubst{\sigma}{\alpha}]{\I}{\imodel{\Ia}{\phi}}\)
  iff (by \rref{lem:bound})
  \(\iwin[\applyusubst{\sigma}{\alpha}]{\I}{\iselectlike{\imodel{\Ia}{\phi}}{\I}{\scomplement{\boundvarsdef{\applyusubst{\sigma}{\alpha}}}}}\).
  
  Consequently, it suffices to show that both winning conditions are equal:
  \[{\iselectlike{\imodel{\I}{\applyusubst{\sigma}{\phi}}}{\I}{\scomplement{\boundvarsdef{\applyusubst{\sigma}{\alpha}}}}}
  = {\iselectlike{\imodel{\Ia}{\phi}}{\I}{\scomplement{\boundvarsdef{\applyusubst{\sigma}{\alpha}}}}}\]
  For this, consider any \(\iget[state]{\It}=\iget[state]{\I}\) on $\scomplement{\boundvarsdef{\applyusubst{\sigma}{\alpha}}}$ and show:
  \(\imodels{\It}{\applyusubst{\sigma}{\phi}}\)
  iff \(\imodels{\Iat}{\phi}\).
  By induction hypothesis,
  \(\imodels{\It}{\applyusubst{\sigma}{\phi}}\)
  iff 
  \(\imodels{\Ita}{\phi}\)
  iff
  \(\imodels{\Iat}{\phi}\)
  by \rref{cor:adjointUsubst}, because
  \(\iget[state]{\It}=\iget[state]{\I}\) on $\scomplement{\boundvarsdef{\applyusubst{\sigma}{\alpha}}}$
  and $\sigma$ is $\boundvarsdef{\applyusubst{\sigma}{\alpha}}$-admissible for $\phi$.
  \qedhere
\end{compactenum}
\end{proof}
\fi%
\begin{proofatend}
The proof is by structural induction lexicographically on the structure of $\sigma$ and of $\phi$, with a simultaneous induction in the proof of \rref{lem:usubst-HG}.
\begin{compactenum}
\item
  \(\imodels{\I}{\applyusubst{\sigma}{\usubstgroup{\theta\geq\eta}}}\)
  iff \(\imodels{\I}{\applyusubst{\sigma}{\theta} \geq \applyusubst{\sigma}{\eta}}\)
  iff \(\ivaluation{\I}{\applyusubst{\sigma}{\theta}} \geq \ivaluation{\I}{\applyusubst{\sigma}{\eta}}\),
  by \rref{lem:usubst-term},
  iff \(\ivaluation{\Ia}{\theta} \geq \ivaluation{\Ia}{\eta}\)
  iff \(\imodels{\Ia}{\theta\geq\eta}\).

\item
  \(\imodels{\I}{\applyusubst{\sigma}{\usubstgroup{p(\theta)}}}\)
  iff \(\imodels{\I}{(\applyusubst{\sigma}{p})\big(\applyusubst{\sigma}{\theta}\big)}\)
  iff \(\imodels{\I}{\applyusubst{\{\usarg\mapsto\applyusubst{\sigma}{\theta}\}}{\applysubst{\sigma}{p(\usarg)}}}\)
  iff \(\imodels{\Iminner}{\applysubst{\sigma}{p(\usarg)}}\) by IH as \(\{\usarg\mapsto\applyusubst{\sigma}{\theta}\}\) is simpler than $\sigma$,
  iff \(d \in \iget[const]{\Ia}(p)\)
  iff \((\ivaluation{\Ia}{\theta}) \in \iget[const]{\Ia}(p)\)
  iff \(\imodels{\Ia}{p(\theta)}\)
  with \(d\mdefeq\ivaluation{\I}{\applyusubst{\sigma}{\theta}} = \ivaluation{\Ia}{\theta}\)
  by \rref{lem:usubst-term} for \(\applyusubst{\sigma}{\theta}\).
  The IH for \(\applyusubst{\{\usarg\mapsto\applyusubst{\sigma}{\theta}\}}{\applysubst{\sigma}{p(\usarg)}}\)
  is used on the possibly bigger formula \({\applysubst{\sigma}{p(\usarg)}}\) but the structurally simpler uniform substitution \(\applyusubst{\{\usarg\mapsto\applyusubst{\sigma}{\theta}\}}{}\) that is a mere substitution on function symbol $\usarg$ of arity zero, not a substitution of predicates.

\ifpredicationals
\item
\def\ImM{\imodif[const]{\I}{\uscarg}{R}}%
\let\ImN\ImM%
\def\IaM{\imodif[const]{\Ia}{\uscarg}{R}}%
  For the case \({\applyusubst{\sigma}{\usubstgroup{\contextapp{C}{\phi}}}}\), first show 
  \(\imodel{\I}{\applyusubst{\sigma}{\phi}} = \imodel{\Ia}{\phi}\).
  By induction hypothesis for the smaller $\phi$:
  \(\imodels{\It}{\applyusubst{\sigma}{\phi}}\)
  iff
  \(\imodels{\Ita}{\phi}\),
  where 
  \(\imodel{\Ita}{\phi}=\imodel{\Ia}{\phi}\)
  by \rref{cor:adjointUsubst}
  for all $\iget[state]{\Ia},\iget[state]{\Ita}$
  (that agree on $\scomplement{\allvars}=\emptyset$, which imposes no condition on $\iget[state]{\I},\iget[state]{\It}$)
  since $\sigma$ is $\allvars$-admissible for $\phi$.
  The proof then proceeds:

  \(\imodels{\I}{\applyusubst{\sigma}{\contextapp{C}{\phi}}}\)
  \(=\imodel{\I}{\contextapp{\applyusubst{\sigma}{C}}{\applyusubst{\sigma}{\phi}}}\)
  \(= \imodel{\I}{\applyusubst{\{\uscarg\mapsto\applyusubst{\sigma}{\phi}\}}{\applysubst{\sigma}{\contextapp{C}{\uscarg}}}}\),
  so, by induction hypothesis for the structurally simpler uniform substitution ${\{\uscarg\mapsto\applyusubst{\sigma}{\phi}\}}$ that is a mere substitution on quantifier symbol $\uscarg$ of arity zero, iff
  \(\imodels{\ImM}{\applysubst{\sigma}{\contextapp{C}{\uscarg}}}\)
  since the adjoint to \(\{\uscarg\mapsto\applyusubst{\sigma}{\phi}\}\) is $\iget[const]{\ImM}$ with \(R\mdefeq\imodel{\I}{\applyusubst{\sigma}{\phi}}\) by definition.
  
  Also
  \(\imodels{\Ia}{\contextapp{C}{\phi}}\)
  \(= \iget[const]{\Ia}(C)\big(\imodel{\Ia}{\phi}\big)\)
  \(= \imodel{\ImN}{\applysubst{\sigma}{\contextapp{C}{\uscarg}}}\)
  for \(R=\imodel{\Ia}{\phi}=\imodel{\I}{\applyusubst{\sigma}{\phi}}\) by induction hypothesis.
  Both sides are, thus, equivalent.

\fi

\item
  \(\imodels{\I}{\applyusubst{\sigma}{\usubstgroup{\lnot\phi}}}\)
  iff \(\imodels{\I}{\lnot\applyusubst{\sigma}{\phi}}\)
  iff \(\inonmodels{\I}{\applyusubst{\sigma}{\phi}}\)
  by IH
  iff \(\inonmodels{\Ia}{\phi}\)
  iff \(\imodels{\Ia}{\lnot\phi}\)

\item
  \(\imodels{\I}{\applyusubst{\sigma}{\usubstgroup{\phi\land\psi}}}\)
  iff \(\imodels{\I}{\applyusubst{\sigma}{\phi} \land \applyusubst{\sigma}{\psi}}\)
  iff \(\imodels{\I}{\applyusubst{\sigma}{\phi}}\) and \(\imodels{\I}{\applyusubst{\sigma}{\psi}}\),
  by induction hypothesis,
  iff \(\imodels{\Ia}{\phi}\) and \(\imodels{\Ia}{\psi}\)
  iff \(\imodels{\Ia}{\phi\land\psi}\)

\item
\def\Imd{\imodif[state]{\I}{x}{d}}%
\def\Iamd{\imodif[state]{\Ia}{x}{d}}%
\def\Imda{\iadjointSubst{\sigma}{\Imd}}%
  \(\imodels{\I}{\applyusubst{\sigma}{\usubstgroup{\lexists{x}{\phi}}}}\)
  iff \(\imodels{\I}{\lexists{x}{\applyusubst{\sigma}{\phi}}}\)
  (provided that $\sigma$ is $\{x\}$-admissible for $\phi$)
  iff \(\imodels{\Imd}{\applyusubst{\sigma}{\phi}}\) for some $d$,
  so, by induction hypothesis,
  iff \(\imodels{\Imda}{\phi}\) for some $d$,
  which is equivalent to
  \(\imodels{\Iamd}{\phi}\) by \rref{cor:adjointUsubst} as $\sigma$ is $\{x\}$-admissible for $\phi$ and $\iget[state]{\I}=\iget[state]{\Imd}$ on $\scomplement{\{x\}}$.
  Thus, this is equivalent to
  \(\imodels{\Ia}{\lexists{x}{\phi}}\).

\item
  \(\imodels{\I}{\applyusubst{\sigma}{\usubstgroup{\ddiamond{\alpha}{\phi}}}}\)
  iff \(\imodels{\I}{\ddiamond{\applyusubst{\sigma}{\alpha}}{\applyusubst{\sigma}{\phi}}}\)
  = \(\iwinreg[\applyusubst{\sigma}{\alpha}]{\I}{\imodel{\I}{\applyusubst{\sigma}{\phi}}}\)
  (provided $\sigma$ is $\boundvarsdef{\applyusubst{\sigma}{\alpha}}$-admissible for $\phi$)
  iff (by \rref{lem:bound})
  \(\iwin[\applyusubst{\sigma}{\alpha}]{\I}{\iselectlike{\imodel{\I}{\applyusubst{\sigma}{\phi}}}{\I}{\scomplement{\boundvarsdef{\applyusubst{\sigma}{\alpha}}}}}\).
  
  Starting conversely:
  \(\imodels{\Ia}{\ddiamond{\alpha}{\phi}}\)
  = \(\iwinreg[\alpha]{\Ia}{\imodel{\Ia}{\phi}}\)
  iff (by \rref{lem:usubst-HG})
  \(\iwin[\applyusubst{\sigma}{\alpha}]{\I}{\imodel{\Ia}{\phi}}\)
  iff (by \rref{lem:bound})
  \(\iwin[\applyusubst{\sigma}{\alpha}]{\I}{\iselectlike{\imodel{\Ia}{\phi}}{\I}{\scomplement{\boundvarsdef{\applyusubst{\sigma}{\alpha}}}}}\).
  
  Consequently, it suffices to show that both winning conditions are equal:
  \[{\iselectlike{\imodel{\I}{\applyusubst{\sigma}{\phi}}}{\I}{\scomplement{\boundvarsdef{\applyusubst{\sigma}{\alpha}}}}}
  = {\iselectlike{\imodel{\Ia}{\phi}}{\I}{\scomplement{\boundvarsdef{\applyusubst{\sigma}{\alpha}}}}}\]
  For this, consider any \(\iget[state]{\It}=\iget[state]{\I}\) on $\scomplement{\boundvarsdef{\applyusubst{\sigma}{\alpha}}}$ and show:
  \(\imodels{\It}{\applyusubst{\sigma}{\phi}}\)
  iff \(\imodels{\Iat}{\phi}\).
  By induction hypothesis,
  \(\imodels{\It}{\applyusubst{\sigma}{\phi}}\)
  iff 
  \(\imodels{\Ita}{\phi}\)
  iff
  \(\imodels{\Iat}{\phi}\)
  by \rref{cor:adjointUsubst}, because
  \(\iget[state]{\It}=\iget[state]{\I}\) on $\scomplement{\boundvarsdef{\applyusubst{\sigma}{\alpha}}}$
  and $\sigma$ is $\boundvarsdef{\applyusubst{\sigma}{\alpha}}$-admissible for $\phi$.
\qedhere
\end{compactenum}
\end{proofatend}

\noindent
The uniform substitute of a game can be won into $X$ from state $\iget[state]{\I}$ in an interpretation iff the game itself can be won into $X$ from $\iget[state]{\I}$ in its adjoint interpretation.
The most complicated part of the uniform substitution lemma proofs is the case of repetition $\prepeat{\alpha}$, because it has a least fixpoint semantics.
The proof needs to be set up carefully by transfinite induction (instead of induction along the number of program loop iterations, which is finite for hybrid systems).

\begin{lemma}[Uniform substitution for games] \label{lem:usubst-HG}
The uniform substitution $\sigma$ and its adjoint interpretation $\iportray{\Ia}$ for $\iportray{\I}$ have the same semantics for all \emph{games} $\alpha$:
\[
\iwin[{\applyusubst{\sigma}{\alpha}}]{\I}{X}
~\text{iff}~
\iwin[\alpha]{\Ia}{X}
\]
\end{lemma}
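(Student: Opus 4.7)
The plan is to proceed by structural induction on $\alpha$, carried out simultaneously with the induction for \rref{lem:usubst}, since hybrid games and \dGL formulas are mutually recursive via tests, evolution domains, and modalities. The induction hypothesis at an arbitrary state $\iget[state]{\It}$ reads $\iwin[\applyusubst{\sigma}{\alpha}]{\It}{Y}$ iff $\iwin[\alpha]{\Ita}{Y}$, where $\Ita$ denotes the adjoint to $\sigma$ computed at $\iget[state]{\It}$ rather than at $\iget[state]{\I}$; note that $\Ita$ and $\Ia$ may differ, since adjoints of function and predicate symbols depend on the state.

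The atomic cases are direct. For a game symbol $a$, \rref{def:adjointUsubst} gives $\iget[const]{\Ia}(a)(X) = \iwinreg[\applysubst{\sigma}{a}]{\I}{X}$ outright. Assignment reduces to \rref{lem:usubst-term}. Continuous evolution uses the $\{x,\D{x}\}$-admissibility of $\sigma$ together with \rref{cor:adjointUsubst} at every intermediate state along the flow (which agrees with $\iget[state]{\I}$ on $\scomplement{\{x,\D{x}\}}$), combined with \rref{lem:usubst-term} for $\genDE{x}$ and the simultaneous \rref{lem:usubst} for the evolution domain. Tests invoke the simultaneous \rref{lem:usubst} as well. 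Choice unfolds to a union and applies the IH pointwise at $\iget[state]{\I}$, and dual applies the IH at $\iget[state]{\I}$ with winning condition $\scomplement{X}$ and then complements. Sequential composition $\alpha;\beta$ mirrors the modality case of \rref{lem:usubst}: applying \rref{lem:bound} to the outer $\iwinreg[\applyusubst{\sigma}{\alpha}]{\I}{\cdot}$ restricts attention to intermediate states $\iget[state]{\It}$ agreeing with $\iget[state]{\I}$ on $\scomplement{\boundvarsdef{\applyusubst{\sigma}{\alpha}}}$; at those states the IH for $\beta$ together with \rref{cor:adjointUsubst} (discharged by the admissibility of $\sigma$ with respect to $\boundvarsdef{\applyusubst{\sigma}{\alpha}}$ for $\beta$) identifies $\iwinreg[\applyusubst{\sigma}{\beta}]{\I}{X}$ with $\iwinreg[\beta]{\Ia}{X}$ locally, after which the IH for $\alpha$ at $\iget[state]{\I}$ closes the case.

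The main obstacle is the repetition case $\prepeat{\alpha}$, because its semantics is a least fixpoint of the monotone operator $F(Z) = X \cup \iwinreg[\applyusubst{\sigma}{\alpha}]{\I}{Z}$ (monotone by \rref{lem:monotone}), so the finite induction on iteration count available for hybrid systems no longer suffices. My plan is transfinite induction along the Knaster--Tarski approximants $W^0 = \emptyset$, $W^{\kappa+1} = X \cup \iwinreg[\applyusubst{\sigma}{\alpha}]{\I}{W^\kappa}$, $W^\lambda = \bigcup_{\kappa<\lambda} W^\kappa$ for limit $\lambda$, and analogously $V^\kappa$ using $\iwinreg[\alpha]{\Ia}{\cdot}$, both of which stabilize at their respective least fixpoints at some closure ordinal. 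By transfinite induction on $\kappa$, I would establish that for every state $\iget[state]{\It}$ agreeing with $\iget[state]{\I}$ on $\scomplement{\boundvarsdef{\applyusubst{\sigma}{\alpha}}}$, $\iget[state]{\It} \in W^\kappa$ iff $\iget[state]{\It} \in V^\kappa$. The successor step applies the IH for $\alpha$ at such $\iget[state]{\It}$ and uses \rref{cor:adjointUsubst} to identify $\Ita$ with $\Ia$ on $\intsigns{\alpha}$, which is justified by the admissibility hypothesis that $\sigma$ is $\boundvarsdef{\applyusubst{\sigma}{\alpha}}$-admissible for $\alpha$ itself; \rref{lem:bound} is used to reduce $W^{\kappa+1}$ and $V^{\kappa+1}$ to winning conditions restricted to the admissible region so the inductive hypothesis on $\kappa$ applies. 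The limit step is routine, since unions preserve equivalence. Instantiating at $\iget[state]{\It} = \iget[state]{\I}$ at the closure ordinal yields the lemma.
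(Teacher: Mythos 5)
Your proposal is correct and follows essentially the same route as the paper's proof: structural induction simultaneous with \rref{lem:usubst}, the same treatment of composition via \rref{lem:bound} and \rref{cor:adjointUsubst}, and the same transfinite induction for $\prepeat{\alpha}$ on inflationary approximants with the key claim quantified over all states agreeing with $\iget[state]{\I}$ on $\scomplement{\boundvarsdef{\applyusubst{\sigma}{\alpha}}}$. The only difference is that you start the approximants at $\emptyset$ where the paper starts at $X$, which is immaterial since both monotone iterations stabilize at the same least fixpoint.
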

\begin{proof}
The proof is by structural induction on $\alpha$, simultaneously with \rref{lem:usubst}, simultaneously for all $\iget[state]{\I}$ and $X$.
\begin{compactenum}
\item
  \(\iwin[\applyusubst{\sigma}{\usubstgroup{a}}]{\I}{X} = \iwinreg[\applysubst{\sigma}{a}]{\I}{X} = \iget[const]{\Ia}(a)(X) = \iwinreg[a]{\Ia}{X}\)
  for game symbol $a$%

\item 
  \(\iwin[\applyusubst{\sigma}{\usubstgroup{\pumod{x}{\theta}}}]{\I}{X}
  = \iwinreg[\pumod{x}{\applyusubst{\sigma}{\theta}}]{\I}{X}\)
  iff \(X \ni \modif{\iget[state]{\I}}{x}{\ivaluation{\I}{\applyusubst{\sigma}{\theta}}}\)
  = \(\modif{\iget[state]{\I}}{x}{\ivaluation{\Ia}{\theta}}\)
  by using \rref{lem:usubst-term}, which is, thus, equivalent to
  \(\iwin[\pumod{x}{\theta}]{\Ia}{X}\).

\item
\newcommand{\Izeta}{\iconcat[state=\varphi(t)]{\I}}
\def\Izetaa{\iadjointSubst{\sigma}{\Izeta}}%
\newcommand{\Iazeta}{\iconcat[state=\varphi(t)]{\Ia}}
  \(\iwin[\applyusubst{\sigma}{\usubstgroup{\pevolvein{\D{x}=\genDE{x}}{\ivr}}}]{\I}{X}
  = \iwinreg[\pevolvein{\D{x}=\applyusubst{\sigma}{\genDE{x}}}
  {\applyusubst{\sigma}{\ivr}}]{\I}{X}\)
  (provided that $\sigma$ is $\{x,\D{x}\}$-admissible for $\genDE{x},\ivr$)
  iff \(\mexists{\varphi:[0,T]\to\linterpretations{\Sigma}{\allvars}}\)
  with \(\varphi(0)=\iget[state]{\I}\) on $\scomplement{\{\D{x}\}}$, \(\varphi(T)\in X\) and for all $t\geq0$:
  \(\D[s]{\varphi(s)}(t) = \ivaluation{\Izeta}{\applyusubst{\sigma}{\genDE{x}}}
  = \ivaluation{\Izetaa}{\genDE{x}}\) by \rref{lem:usubst-term}
  and
  \(\imodels{\Izeta}{\applyusubst{\sigma}{\ivr}}\),
  which, by \rref{lem:usubst}, holds iff
  \(\imodels{\Izetaa}{\ivr}\).
  
  Conversely,
  \(\iwin[\pevolvein{\D{x}=\genDE{x}}{\ivr}]{\Ia}{X}\)
  iff \(\mexists{\varphi:[0,T]\to\linterpretations{\Sigma}{\allvars}}\)
  with \(\varphi(0)=\iget[state]{\I}\) on $\scomplement{\{\D{x}\}}$ and \(\varphi(T)\in X\) and for all $t\geq0$:
  \(\D[s]{\varphi(s)}(t) = \ivaluation{\Iazeta}{\genDE{x}}\)
  and
  \(\imodels{\Iazeta}{\ivr}\).
  Both sides agree since
  \(\ivalues{\Iazeta}{\genDE{x}}=\ivalues{\Izetaa}{\genDE{x}}\) and
  \(\imodel{\Izetaa}{\ivr}=\imodel{\Iazeta}{\ivr}\)
  by \rref{cor:adjointUsubst}
  as $\sigma$ is $\{x,\D{x}\}$-admissible for $\genDE{x}$ and $\ivr$ and \(\iget[state]{\I}=\iget[state]{\Iazeta}\) on $\scomplement{\boundvarsdef{\pevolvein{\D{x}=\genDE{x}}{\ivr}}}\supseteq\scomplement{\{x,\D{x}\}}$ by \rref{lem:bound}.
  
\item
  \(\iwin[\applyusubst{\sigma}{\usubstgroup{\ptest{\ivr}}}]{\I}{X}
  = \iwinreg[\ptest{\applyusubst{\sigma}{\ivr}}]{\I}{X} = \imodel{\I}{\applyusubst{\sigma}{\ivr}} \cap X\)
  iff, by \rref{lem:usubst}, it is the case that 
  \(\iget[state]{\Ia} \in \imodel{\Ia}{\ivr} \cap X\)
  \(= \iwinreg[\ptest{\ivr}]{\Ia}{X}\).

\item  
  \(\iwin[\applyusubst{\sigma}{\usubstgroup{\pchoice{\alpha}{\beta}}}]{\I}{X}
  = \iwinreg[\pchoice{\applyusubst{\sigma}{\alpha}}{\applyusubst{\sigma}{\beta}}]{\I}{X}\)
  = \(\iwinreg[\applyusubst{\sigma}{\alpha}]{\I}{X} \cup \iwinreg[\applyusubst{\sigma}{\beta}]{\I}{X}\),
  which, by induction hypothesis, is equivalent to
  \(\iwin[\alpha]{\Ia}{X}\) or \(\iwin[\beta]{\Ia}{X}\),
  which is 
  \(\iwin[\alpha]{\Ia}{X} \cup \iwinreg[\beta]{\Ia}{X} = \iwinreg[\pchoice{\alpha}{\beta}]{\Ia}{X}\).
  
\item
  \(\iwin[\applyusubst{\sigma}{\usubstgroup{\alpha;\beta}}]{\I}{X}
  = \iwinreg[\applyusubst{\sigma}{\alpha}; \applyusubst{\sigma}{\beta}]{\I}{X}\)
  = \(\iwinreg[\applyusubst{\sigma}{\alpha}]{\I}{\iwinreg[\applyusubst{\sigma}{\beta}]{\I}{X}}\)
  (provided $\sigma$ is $\boundvarsdef{\applyusubst{\sigma}{\alpha}}$-admissible for $\beta$), which holds iff
  \(\iwin[\applyusubst{\sigma}{\alpha}]{\I}{\iselectlike{\iwinreg[\applyusubst{\sigma}{\beta}]{\I}{X}}{\I}{\scomplement{\boundvarsdef{\applyusubst{\sigma}{\alpha}}}}}\)
  by \rref{lem:bound}.
  
  Starting conversely:
  \(\iwin[\alpha;\beta]{\Ia}{X}
  = \iwinreg[\alpha]{\Ia}{\iwinreg[\beta]{\Ia}{X}}\),
  iff, by IH,\\
  \(\iwin[\applyusubst{\sigma}{\alpha}]{\I}{\iwinreg[\beta]{\Ia}{X}}\)
  iff, by Lem.\,\ref{lem:bound},
  \(\iwin[\applyusubst{\sigma}{\alpha}]{\I}{\iselectlike{\iwinreg[\beta]{\Ia}{X}}{\I}{\scomplement{\boundvarsdef{\applyusubst{\sigma}{\alpha}}}}}\).
  
  Consequently, it suffices to show that both winning conditions are equal:
  \[
  {\iselectlike{\iwinreg[\applyusubst{\sigma}{\beta}]{\I}{X}}{\I}{\scomplement{\boundvarsdef{\applyusubst{\sigma}{\alpha}}}}}
  =
  {\iselectlike{\iwinreg[\beta]{\Ia}{X}}{\I}{\scomplement{\boundvarsdef{\applyusubst{\sigma}{\alpha}}}}}
  \]
  Consider any \(\iget[state]{\It}=\iget[state]{\I}\) on $\scomplement{\boundvarsdef{\applyusubst{\sigma}{\alpha}}}$ to show:
  \(\iwin[\applyusubst{\sigma}{\beta}]{\It}{X}\)
  iff \(\iwin[\beta]{\Iat}{X}\).
  By IH, 
  \(\iwin[\applyusubst{\sigma}{\beta}]{\It}{X}\)
  iff \(\iwin[\beta]{\Ita}{X}\)
  iff \(\iwin[\beta]{\Iat}{X}\)
  by \rref{cor:adjointUsubst}, because \(\iget[state]{\It}=\iget[state]{\I}\) on $\scomplement{\boundvarsdef{\applyusubst{\sigma}{\alpha}}}$ and $\sigma$ is $\boundvarsdef{\applyusubst{\sigma}{\alpha}}$-admissible for $\beta$.

\item
{%
\newcommand{\inflop}[2][]{\tau^{#1}(#2)}%
\newcommand{\oinflop}[2][]{\varrho^{#1}(#2)}%
  The case \(\iwin[\applyusubst{\sigma}{\usubstgroup{\prepeat{\alpha}}}]{\I}{X}
  = \iwinreg[\prepeat{(\applyusubst{\sigma}{\alpha})}]{\I}{X}\)
  (provided $\sigma$ is $\boundvarsdef{\applyusubst{\sigma}{\alpha}}$-admissible for $\alpha$)
  uses an equivalent inflationary fixpoint formulation \cite[Thm.\,3.5]{DBLP:journals/tocl/Platzer15}:%
  \allowdisplaybreaks%
  \begin{align*}
    \inflop[0]{X} &\mdefeq X\\
    \inflop[\kappa+1]{X} &\mdefeq X \cup \iwinreg[\applyusubst{\sigma}{\alpha}]{\I}{\inflop[\kappa]{X}} && \kappa+1~\text{a successor ordinal}\\
    \inflop[\lambda]{X} &\mdefeq \cupfold_{\kappa<\lambda} \inflop[\kappa]{X} && \lambda\neq0~\text{a limit ordinal}
  \intertext{%
  where the union \(\inflop[\infty]{X} = \cupfold_{\kappa<\infty} \inflop[\kappa]{X}\) over all ordinals is \(\iwinreg[\prepeat{(\applyusubst{\sigma}{\alpha})}]{\I}{X}\).
  Define a similar fixpoint formulation for the other side \(\iwinreg[\prepeat{\alpha}]{\Ia}{X} = \oinflop[\infty]{X}\):}%
    \oinflop[0]{X} &\mdefeq X\\
    \oinflop[\kappa+1]{X} &\mdefeq X \cup \iwinreg[\alpha]{\Ia}{\oinflop[\kappa]{X}} && \kappa+1~\text{a successor ordinal}\\
    \oinflop[\lambda]{X} &\mdefeq \cupfold_{\kappa<\lambda} \oinflop[\kappa]{X} && \lambda\neq0~\text{a limit ordinal}
  \end{align*}%
  The equivalence
  \(\iwin[\applyusubst{\sigma}{\usubstgroup{\prepeat{\alpha}}}]{\I}{X} = \inflop[\infty]{X}\)
  iff
  \(\iwin[\prepeat{\alpha}]{\Ia}{X} = \oinflop[\infty]{X}\)
  follows from a proof that:
  \[
  \text{for all}~\kappa~
  \text{and all}~X
  ~\text{and all}~\iget[state]{\It}=\iget[state]{\I} ~\text{on $\scomplement{\boundvarsdef{\applyusubst{\sigma}{\alpha}}}$}:
  ~
  \iget[state]{\It} \in \inflop[\kappa]{X} ~\text{iff}~ \iget[state]{\It} \in \oinflop[\kappa]{X}
  \]
  This is proved by induction on ordinal $\kappa$, which is either 0, a limit ordinal $\lambda\neq0$, or a successor ordinal.
  \begin{enumerate}
  \item[$\kappa=0$:]
  \(\iget[state]{\It} \in \inflop[0]{X} ~\text{iff}~ \iget[state]{\It} \in \oinflop[0]{X}\), because both sets equal $X$.
  
  \item[$\lambda$:]
  \(\iget[state]{\It} \in \inflop[\lambda]{X} 
  = \cupfold_{\kappa<\lambda} \inflop[\kappa]{X}\)
  iff there is a $\kappa<\lambda$ such that
  \(\iget[state]{\It} \in \inflop[\kappa]{X}\)
  iff, by IH,
  \(\iget[state]{\It} \in \oinflop[\kappa]{X}\) for some $\kappa<\lambda$,
  iff
  \(\iget[state]{\It} \in \cupfold_{\kappa<\lambda} \oinflop[\kappa]{X}
  = \oinflop[\lambda]{X}\).
  
  \item[$\kappa+1$:]
  \(\iget[state]{\It} \in \inflop[\kappa+1]{X}
  = X \cup \iwinreg[\applyusubst{\sigma}{\alpha}]{\I}{\inflop[\kappa]{X}}\),
  which, by \rref{lem:bound}, is equivalent to
  \(\iget[state]{\It} \in X \cup \iwinreg[\applyusubst{\sigma}{\alpha}]{\I}{\iselectlike{\inflop[\kappa]{X}}{\It}{\scomplement{\boundvarsdef{\applyusubst{\sigma}{\alpha}}}}}\)
  
  Starting from the other end,
  \(\iget[state]{\It} \in \oinflop[\kappa+1]{X}
  = X \cup \iwinreg[\alpha]{\Ia}{\oinflop[\kappa]{X}}\)
  iff, by \rref{cor:adjointUsubst} using \(\iget[state]{\It}=\iget[state]{\I}\) on $\scomplement{\boundvarsdef{\applyusubst{\sigma}{\alpha}}} \supseteq \scomplement{\boundvarsdef{\alpha}}$,
  \(\iget[state]{\It} \in X \cup \iwinreg[\alpha]{\Ita}{\oinflop[\kappa]{X}}\)
  iff, by induction hypothesis on $\alpha$,
  \(\iget[state]{\It} \in X \cup \iwinreg[\applyusubst{\sigma}{\alpha}]{\It}{\oinflop[\kappa]{X}}\)
  iff, by \rref{lem:bound},
  \(\iget[state]{\It} \in X \cup \iwinreg[\applyusubst{\sigma}{\alpha}]{\It}{\iselectlike{\oinflop[\kappa]{X}}{\It}{\scomplement{\boundvarsdef{\applyusubst{\sigma}{\alpha}}}}}\)
  Consequently, it suffices to show that both winning conditions are equal:
  \(
  {\iselectlike{\inflop[\kappa]{X}}{\It}{\scomplement{\boundvarsdef{\applyusubst{\sigma}{\alpha}}}}}
  =
  {\iselectlike{\oinflop[\kappa]{X}}{\It}{\scomplement{\boundvarsdef{\applyusubst{\sigma}{\alpha}}}}}
  \).
  Consider any state \(\iget[state]{\Iz}=\iget[state]{\I}\) on $\scomplement{\boundvarsdef{\applyusubst{\sigma}{\alpha}}}$,
  then
  \(\iget[state]{\Iz} \in \inflop[\kappa]{X}\)
  iff
  \(\iget[state]{\Iz} \in \oinflop[\kappa]{X}\)
  by induction hypothesis on $\kappa<\kappa+1$.
  \end{enumerate}
}

\item
  \(\iwin[\applyusubst{\sigma}{\usubstgroup{\pdual{\alpha}}}]{\I}{X}
  = \iwinreg[\pdual{(\applyusubst{\sigma}{\alpha})}]{\I}{X}
  = \scomplement{\big(\iwinreg[\applyusubst{\sigma}{\alpha}]{\I}{\scomplement{X}}\big)}\)
  iff \(\inowin[\applyusubst{\sigma}{\alpha}]{\I}{\scomplement{X}}\),
  which, by IH, is equivalent to \(\inowin[\alpha]{\Ia}{\scomplement{X}}\),
  which is, in turn, equivalent to
  \(\iget[state]{\Ia} \in \scomplement{\big(\iwinreg[\alpha]{\Ia}{\scomplement{X}}\big)}
  = \iwinreg[\pdual{\alpha}]{\Ia}{X}\).
  \qedhere

\end{compactenum}
\end{proof}

\subsection{Soundness}

Soundness of uniform substitution for \dGL now follows from the above uniform substitution lemmas with the same proof that it had from corresponding lemmas in \dL \cite{DBLP:journals/jar/Platzer17}.
Due to the modular setup of uniform substitutions, the change from \dL to \dGL is reflected in how the uniform substitution lemmas are proved, not in how they are used for the soundness of proof rule \irref{US}.
A proof rule is \emph{sound} iff validity of all its premises implies validity of its conclusion.

\begin{theorem}[Soundness of uniform substitution]%
  \label{thm:usubst-sound}%
  Proof rule \irref{US} is sound.%
  {\upshape\[
  \cinferenceRuleQuote{US}
  \]}
\end{theorem}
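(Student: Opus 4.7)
The plan is to derive soundness of \irref{US} as an almost immediate corollary of \rref{lem:usubst}, since the hard semantic work has already been discharged in the uniform substitution lemmas. The rule reads: from a valid premise $\phi$, conclude $\applyusubst{\sigma}{\phi}$. By \rref{def:dGL-semantics}, validity of the premise means $\imodels{\J}{\phi}$ for every interpretation $\iget[const]{\J}$ and every state $\iget[state]{\J}$; to establish validity of the conclusion I must show $\imodels{\I}{\applyusubst{\sigma}{\phi}}$ for every interpretation $\iget[const]{\I}$ and every state $\iget[state]{\I}$.

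First I would fix an arbitrary interpretation $\iget[const]{\I}$ and state $\iget[state]{\I}$. Then I would invoke \rref{lem:usubst}, which equates the syntactic effect of $\sigma$ with the semantic effect of passing to the adjoint interpretation: $\imodels{\I}{\applyusubst{\sigma}{\phi}}$ iff $\imodels{\Ia}{\phi}$. The adjoint interpretation $\iget[const]{\Ia}$ from \rref{def:adjointUsubst} is a perfectly well-defined interpretation (smoothness of the $\iget[const]{\Ia}(f)$ and the functional form of $\iget[const]{\Ia}(a)$ were checked in \rref{cor:adjointUsubst}), so it lies within the universal quantification of the premise's validity. Consequently $\imodels{\Ia}{\phi}$ holds, and then \rref{lem:usubst} gives $\imodels{\I}{\applyusubst{\sigma}{\phi}}$. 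Since $\iget[const]{\I}$ and $\iget[state]{\I}$ were arbitrary, $\entails\applyusubst{\sigma}{\phi}$.

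The one genuine side condition is that each recursive application of $\sigma$ appearing in the proof must be admissible in the sense of \rref{def:usubst-admissible}; this is built into the definition of $\applyusubst{\sigma}{\phi}$ in \rref{fig:usubst} (admissibility is required for $\applyusubst{\sigma}{\phi}$ to be defined at all) and is exactly the hypothesis under which \rref{lem:usubst} was proved, so no extra work is needed here.

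The main obstacle, as is typical with uniform substitution arguments, is not this final soundness step itself but the earlier infrastructure that makes the one-line adjoint reduction go through: the correctness of the static semantics (\rref{lem:coincidence}, \rref{lem:coincidence-HG}, \rref{lem:bound}), admissibility of adjoints (\rref{cor:adjointUsubst}), and the simultaneous transfinite induction behind \rref{lem:usubst} and \rref{lem:usubst-HG}. Once those are in hand, the soundness of \irref{US} is essentially the observation that validity is preserved under reinterpretation, and the adjoint $\iget[const]{\Ia}$ is just another interpretation.
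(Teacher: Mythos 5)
Your proposal is correct and takes essentially the same route as the paper's own proof: fix an arbitrary $\iportray{\I}$, use \rref{lem:usubst} to reduce \(\imodels{\I}{\applyusubst{\sigma}{\phi}}\) to \(\imodels{\Ia}{\phi}\), and observe that the adjoint $\iportray{\Ia}$ is just another interpretation--state pair covered by the universally quantified validity of the premise. Your added remarks on well-definedness of the adjoint and on admissibility being baked into the definition of \(\applyusubst{\sigma}{\phi}\) match what the paper delegates to \rref{cor:adjointUsubst} and \rref{def:usubst-admissible}.
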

\begin{proofatend}
\def\Ia{\iadjointSubst{\sigma}{\I}}%
Let the premise $\phi$ of \irref{US} be valid, i.e., \m{\imodels{\I}{\phi}} for all interpretations $\iget[const]{\I}$ and states $\iget[state]{\I}$.
To show that the conclusion is valid, consider any interpretation $\iget[const]{\I}$ and state $\iget[state]{\I}$ and show \(\imodels{\I}{\applyusubst{\sigma}{\phi}}\).
By \rref{lem:usubst}, \(\imodels{\I}{\applyusubst{\sigma}{\phi}}\) iff \(\imodels{\Ia}{\phi}\).
Now \(\imodels{\Ia}{\phi}\) holds, because \(\imodels{\I}{\phi}\) for all $\iportray{\I}$, including $\iportray{\Ia}$, by premise.
\qedhere
\end{proofatend}

\noindent
As in \dL, uniform substitutions can soundly instantiate locally sound proof rules or proofs \cite{DBLP:journals/jar/Platzer17} just like proof rule \irref{US} soundly instantiates axioms or other valid formulas (\rref{thm:usubst-sound}).
An inference or proof rule is \emph{locally sound} iff its conclusion is valid in any interpretation $\iget[const]{\I}$ in which all its premises are valid.
All locally sound proof rules are sound.
The use of \rref{thm:usubst-rule} in a proof is marked \irref{USR}.%

\begin{theorem}[Soundness of uniform substitution of rules] \label{thm:usubst-rule}
  If $\freevarsdef{\sigma}=\emptyset$, all uniform substitution instances of locally sound inferences are locally sound:
  \[
\linfer
{\phi_1 \quad \dots \quad \phi_n}
{\psi}
~~\text{locally sound}\qquad\text{implies}\qquad
\linfer%
{\applyusubst{\sigma}{\phi_1} \quad \dots \quad \applyusubst{\sigma}{\phi_n}}
{\applyusubst{\sigma}{\psi}}
~~\text{locally sound}
\irlabel{USR|USR}
  \]
\end{theorem}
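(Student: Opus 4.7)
The plan is to reduce this to the soundness of rule \irref{US} (Theorem \ref{thm:usubst-sound}) combined with Lemma \ref{lem:usubst}, but exploiting the extra hypothesis $\freevarsdef{\sigma}=\emptyset$ to make the adjoint interpretation \emph{state-independent}. Fix an interpretation $\iget[const]{\I}$ and suppose all premises $\applyusubst{\sigma}{\phi_i}$ are valid in $\iget[const]{\I}$; the goal is to show \(\imodels{\I}{\applyusubst{\sigma}{\psi}}\) for every state $\iget[state]{\I}$.

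The critical observation is that, since $\freevarsdef{\sigma}=\emptyset$, the hypothesis \(\iget[state]{\I}=\iget[state]{\It}\) on \(\freevarsdef{\sigma}\) in \rref{cor:adjointUsubst} is vacuous, so \(\iget[const]{\Ia}=\iget[const]{\Ita}\) for any two states $\iget[state]{\I},\iget[state]{\It}$. In other words, the adjoint interpretation $\iget[const]{\Ia}$ depends only on the original interpretation $\iget[const]{\I}$ and not on the state used to build it; call this common interpretation $\iget[const]{J}$. I would then argue in three steps:
\begin{inparaenum}[\it i)]
\item For each premise $\phi_i$ and every state $\iget[state]{\It}$, validity of $\applyusubst{\sigma}{\phi_i}$ in $\iget[const]{\I}$ gives \(\imodels{\It}{\applyusubst{\sigma}{\phi_i}}\), so by \rref{lem:usubst} \(\imodels{\Ita}{\phi_i}\), which equals \(\vdLint[const=J,state=\nu]\models\phi_i\) by the state-independence of the adjoint. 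Hence $\phi_i$ is valid in $\iget[const]{J}$.
\item By local soundness of the original inference applied in the interpretation $\iget[const]{J}$, the conclusion $\psi$ is valid in $\iget[const]{J}$.
\item For every state $\iget[state]{\I}$, $\iget[const]{\Ia}=\iget[const]{J}$, so \(\imodels{\Ia}{\psi}\), and \rref{lem:usubst} turns this back into \(\imodels{\I}{\applyusubst{\sigma}{\psi}}\), as required.
\end{inparaenum}

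The main subtlety, and the reason the assumption $\freevarsdef{\sigma}=\emptyset$ is essential, is that without it the adjoint $\iget[const]{\Ia}$ depends genuinely on $\iget[state]{\I}$: validity of $\applyusubst{\sigma}{\phi_i}$ in $\iget[const]{\I}$ would only give us \(\imodels{\Ia}{\phi_i}\) for \emph{each} state-dependent adjoint, not validity of $\phi_i$ in a single interpretation to which local soundness of the original rule could be applied. Collapsing this family of adjoints into one fixed interpretation via \rref{cor:adjointUsubst} is the only non-routine step; once this is done, everything else is a mechanical double application of \rref{lem:usubst} sandwiching the local soundness of the given inference.
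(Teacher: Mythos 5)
Your proof is correct and follows essentially the same route as the paper's own proof: both use \rref{cor:adjointUsubst} with $\freevarsdef{\sigma}=\emptyset$ to collapse the state-dependent family of adjoint interpretations into a single interpretation, apply local soundness of the original inference there, and sandwich this with two applications of \rref{lem:usubst} to transfer validity of the substituted premises in and validity of the substituted conclusion back out. Your closing remark correctly identifies the state-independence of the adjoint as the one non-routine step, which is precisely where the paper's proof invokes the hypothesis as well.
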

\begin{proofatend}
\def\locproof{\mathcal{D}}%
Let $\locproof$ be the inference on the left and $\applyusubst{\sigma}{\locproof}$ the substituted inference on the right.
Assume $\locproof$ to be locally sound.
To show that $\applyusubst{\sigma}{\locproof}$ is locally sound, consider any $\iget[const]{\I}$ in which all premises of $\applyusubst{\sigma}{\locproof}$ are valid, i.e.,
\(\iget[const]{\I}\models{\applyusubst{\sigma}{\phi_j}}\) for all $j$, i.e.,
\(\imodels{\I}{\applyusubst{\sigma}{\phi_j}}\) for all $\iget[state]{\I}$ and all $j$.
By \rref{lem:usubst}, \(\imodels{\I}{\applyusubst{\sigma}{\phi_j}}\) is equivalent to
\(\imodels{\Ia}{\phi_j}\),
which, thus, also holds for all $\iget[state]{\I}$ and all $j$.
By \rref{cor:adjointUsubst}, \(\imodel{\Ia}{\phi_j}=\imodel{\Ita}{\phi_j}\) for all $\iget[state]{\Ita}$, since $\freevarsdef{\sigma}=\emptyset$.
Fix an arbitrary state $\iget[state]{\Ita}$.
Then \(\imodels{\Itar}{\applyusubst{\sigma}{\phi_j}}\) holds for all $\iget[state]{\Itar}$ and all $j$ for the same (arbitrary) $\iget[state]{\Ita}$ that determines $\iget[const]{\Ita}$.

Consequently, all premises of $\locproof$ are valid in the same $\iget[const]{\Ita}$, i.e. \(\iget[const]{\Ita}\models{\phi_j}\) for all $j$.
Thus, \(\iget[const]{\Ita}\models{\psi}\) by local soundness of $\locproof$.
That is, \(\imodels{\Ia}{\psi}=\imodel{\Ita}{\psi}\) by \rref{cor:adjointUsubst} for all $\iget[state]{\Ia}$.
By \rref{lem:usubst}, \(\imodels{\Ia}{\psi}\) is equivalent to \(\imodels{\I}{\applyusubst{\sigma}{\psi}}\),
which continues to hold for all $\iget[state]{\I}$.
Thus, \(\iget[const]{\I}\models{\applyusubst{\sigma}{\psi}}\), i.e., the conclusion of $\applyusubst{\sigma}{\locproof}$ is valid in $\iget[const]{\I}$, hence $\applyusubst{\sigma}{\locproof}$ is locally sound.
Consequently, all uniform substitution instances $\applyusubst{\sigma}{\locproof}$ of locally sound inferences $\locproof$ with $\freevarsdef{\sigma}=\emptyset$ are locally sound.
\qedhere
\end{proofatend}

\section{Axioms}

Axioms and axiomatic proof rules for differential game logic are listed in \rref{fig:dGL}, where $\usall$ is the (finite-dimensional) vector of all relevant variables.
The axioms are concrete \dGL formulas that are valid.
The axiomatic proof rules are concrete formulas for the premises and concrete formulas for the conclusion that are locally sound.
This makes \rref{fig:dGL} straightforward to implement by copy-and-paste.
\rref{thm:usubst-sound} can be used to instantiate axioms to other \dGL formulas.
\rref{thm:usubst-rule} can be used to instantiate axiomatic proof rules to other concrete \dGL inferences.
Complete axioms for first-order logic from elsewhere \cite{DBLP:journals/jar/Platzer17} and a proof rule (written \irref{qear}\irlabel{qear|\usebox{\Rval}}) for decidable real arithmetic \cite{tarski_decisionalgebra51} are assumed as a basis.

\begin{figure}[tb]
  \centering
  \renewcommand*{\irrulename}[1]{\text{#1}}%
  \renewcommand{\linferenceRuleNameSeparation}{~~}
  \newdimen\linferenceRulehskipamount%
  \linferenceRulehskipamount=1mm%
  \newdimen\lcalculuscollectionvskipamount%
  \lcalculuscollectionvskipamount=0.1em%
  \begin{calculuscollections}{\columnwidth}
    \begin{calculus}
      \cinferenceRule[box|$\dibox{\cdot}$]{box axiom}
      {\linferenceRule[equiv]
        {\lnot\ddiamond{a}{\lnot p(\usall)}}
        {\axkey{\dbox{a}{p(\usall)}}}
      }
      {}
      \cinferenceRule[assignd|$\didia{:=}$]{assignment / substitution axiom}
      {\linferenceRule[equiv]
        {p(f)}
        {\axkey{\ddiamond{\pupdate{\umod{x}{f}}}{p(x)}}}
      }
      {}%
      \cinferenceRule[DSd|DS]{(constant) differential equation solution} %
      {\linferenceRule[viuqe]
        {\axkey{\ddiamond{\pevolve{\D{x}=f}}{p(x)}}}
        {\lexists{t{\geq}0}{\ddiamond{\pupdate{\pumod{x}{x+f\itimes t}}}{p(x)}}}
      }
      {}
      \cinferenceRule[testd|$\didia{?}$]{test}
      {\linferenceRule[equiv]
        {q \land p}
        {\axkey{\ddiamond{\ptest{q}}{p}}}
      }{}
      \cinferenceRule[choiced|$\didia{\cup}$]{axiom of nondeterministic choice}
      {\linferenceRule[equiv]
        {\ddiamond{a}{p(\usall)} \lor \ddiamond{b}{p(\usall)}}
        {\axkey{\ddiamond{\pchoice{a}{b}}{p(\usall)}}}
      }{}
      \cinferenceRule[composed|$\didia{{;}}$]{composition}
      {\linferenceRule[equiv]
        {\ddiamond{a}{\ddiamond{b}{p(\usall)}}}
        {\axkey{\ddiamond{a;b}{p(\usall)}}}
      }{}
      \cinferenceRule[iterated|$\didia{{}^*}$]{iteration/repeat unwind pre-fixpoint, even fixpoint}
      {\linferenceRule[equiv]
        {p(\usall) \lor \ddiamond{a}{\ddiamond{\prepeat{a}}{p(\usall)}}}
        {\axkey{\ddiamond{\prepeat{a}}{p(\usall)}}}
      }{}%
      \cinferenceRule[duald|$\didia{{^d}}$]{dual}
      {\linferenceRule[equiv]
        {\lnot\ddiamond{a}{\lnot p(\usall)}}
        {\axkey{\ddiamond{\pdual{a}}{p(\usall)}}}
      }{}
    \end{calculus}
    \qquad
    \begin{calculus}
      \cinferenceRule[M|M]{$\ddiamond{}{}$ monotone / $\ddiamond{}{}$-generalization} %
      {\linferenceRule[formula]
        {p(\usall)\limply q(\usall)}
        {\ddiamond{a}{p(\usall)}\limply\ddiamond{a}{q(\usall)}}
      }{}
      \cinferenceRule[FP|FP]{iteration is least fixpoint / reflexive transitive closure RTC, equivalent to invind in the presence of R}
      {\linferenceRule[formula]
        {p(\usall) \lor \ddiamond{a}{q(\usall)} \limply q(\usall)}
        {\ddiamond{\prepeat{a}}{p(\usall)} \limply q(\usall)}
      }{}
      \cinferenceRule[MP|MP]{modus ponens}
      {\linferenceRule[formula]
        {p \quad p\limply q}
        {q}
      }{}%
      \cinferenceRule[gena|$\forall$]{$\forall{}$ generalisation}
      {\linferenceRule[formula]
        {p(x)}
        {\lforall{x}{p(x)}}
      }{}%
    \end{calculus}%
  \end{calculuscollections}
  \vspace*{-\baselineskip} %
  \caption{Differential game logic axioms and axiomatic proof rules}
  \label{fig:dGL}
\end{figure}%

\newcommand{\precond}{j(x)}%
The axiom \irref{composed}, for example, expresses that Angel has a winning strategy in game $a;b$ to achieve $p(\usall)$ if and only if she has a winning strategy in game $a$ to achieve \(\ddiamond{b}{p(\usall)}\), i.e., to reach the region from which she has a winning strategy in game $b$ to achieve $p(\usall)$.
Rule \irref{US} can instantiate axiom \irref{composed}, for example, with
\(\sigma = \usubstlist{\usubstmod{a}{\pdual{(\pchoice{\pupdate{\pumod{v}{2}}}{\pupdate{\pumod{v}{\sndvel}}})}},
\usubstmod{b}{\pevolve{\D{x}=v}},
\usubstmod{p(\usall)}{x>0}}\)
to prove
\[
\ddiamond{\pdual{(\pchoice{\pupdate{\pumod{v}{2}}}{\pupdate{\pumod{v}{\sndvel}}})}; \pevolve{\D{x}=v}}{x>0}
\lbisubjunct
\ddiamond{\pdual{(\pchoice{\pupdate{\pumod{v}{2}}}{\pupdate{\pumod{v}{\sndvel}}})}} {\ddiamond{\pevolve{\D{x}=v}}{x>0}}
\]
The right-hand formula can be simplified when using \irref{US} again to instantiate axiom \irref{duald} with
\(\sigma = \usubstlist{\usubstmod{a}{\pchoice{\pupdate{\pumod{v}{2}}}{\pupdate{\pumod{v}{\sndvel}}}},
\usubstmod{p(\usall)}{\ddiamond{\pevolve{\D{x}=v}}{x>0}}}\)
to prove
\[
\ddiamond{\pdual{(\pchoice{\pupdate{\pumod{v}{2}}}{\pupdate{\pumod{v}{\sndvel}}})}} {\ddiamond{\pevolve{\D{x}=v}}{x>0}}
\lbisubjunct
\lnot\ddiamond{\pchoice{\pupdate{\pumod{v}{2}}}{\pupdate{\pumod{v}{\sndvel}}}}{\lnot\ddiamond{\pevolve{\D{x}=v}}{x>0}}
\]
When eliding the equivalences and writing down the resulting formula along with the axiom that was uniformly substituted to obtain it, this yields a proof:
\begin{sequentdeduction}[array]
\linfer[composed]
{\linfer[duald]
  {\linfer[choiced]
    {\linfer[assignd]
      {\linfer[DSd]
        {\linfer[assignd]
            {\lsequent{\precond} {\lnot(\lnot\lexists{t{\geq}0}{x+2t>0} \lor \lnot\lexists{t{\geq}0}{x+(\sndvel)t>0})}}
          {\lsequent{\precond} {\lnot(\lnot\lexists{t{\geq}0}{\ddiamond{\pupdate{\pumod{x}{x+2\itimes t}}}{x>0}} \lor \ddiamond{\pupdate{\pumod{v}{\sndvel}}}{\lnot\lexists{t{\geq}0}{\ddiamond{\pupdate{\pumod{x}}{x+v\itimes t}}{x>0}}})}}
        }%
        {\lsequent{\precond} {\lnot(\lnot\ddiamond{\pevolve{\D{x}=2}}{x>0} \lor \ddiamond{\pupdate{\pumod{v}{\sndvel}}}{\lnot\ddiamond{\pevolve{\D{x}=v}}{x>0}})}}
      }%
      {\lsequent{\precond} {\lnot(\ddiamond{\pupdate{\pumod{v}{2}}}{\lnot\ddiamond{\pevolve{\D{x}=v}}{x>0}} \lor \ddiamond{\pupdate{\pumod{v}{\sndvel}}}{\lnot\ddiamond{\pevolve{\D{x}=v}}{x>0}})}}
    }%
    {\lsequent{\precond} {\lnot\ddiamond{\pchoice{\pupdate{\pumod{v}{2}}}{\pupdate{\pumod{v}{\sndvel}}}}{\lnot\ddiamond{\pevolve{\D{x}=v}}{x>0}}}}
  }%
{\lsequent{\precond} {\ddiamond{\pdual{(\pchoice{\pupdate{\pumod{v}{2}}}{\pupdate{\pumod{v}{\sndvel}}})}} {\ddiamond{\pevolve{\D{x}=v}}{x>0}}}}
}%
{\lsequent{\precond} {\ddiamond{\pdual{(\pchoice{\pupdate{\pumod{v}{2}}}{\pupdate{\pumod{v}{\sndvel}}})}; \pevolve{\D{x}=v}}{x>0}
}}
\end{sequentdeduction}
It is soundness-critical that \irref{US} checks velocity $v$ is not bound in the ODE when substituting it for $f$ in \irref{DSd}, since \(x+v\itimes t\) is not, otherwise, the correct solution of \(\D{x}=v\).
Likewise, the velocity assignment \(\pupdate{\pumod{v}{\sndvel}}\) cannot soundly be substituted into the differential equation via \irref{assignd}, which \irref{US} prevents as $x$ is bound in \m{\D{x}=v}.
Instead, axiom \irref{assignd} for \(\pupdate{\pumod{v}{\sndvel}}\) needs to be delayed until after solving by \irref{DSd}.
If it were \m{\pupdate{\pumod{v}{x^2{+}1}}} instead of \m{\pupdate{\pumod{v}{\sndvel}}}, then rule \irref{qear} would finish the proof.
But for the above proof with \m{\pupdate{\pumod{v}{\sndvel}}} to finish, extra assumptions need to be identified.

With
\(\sigma = \usubstlist{\usubstmod{a}{\pdual{(\pchoice{\pupdate{\pumod{v}{2}}}{\pupdate{\pumod{v}{\sndvel}}})}; \pevolve{\D{x}=v}},
\usubstmod{p(\usall)}{x{>}0},
\usubstmod{q(\usall)}{x^2{>}0}}\),
\irref{USR} instantiates axiomatic rule \irref{M} to prove an inference continuing the proof:
\[
\hspace{0.5cm}
\linfer[USR+M]
{{x{>}0} \limply {x^2{>}0}}
{{\ddiamond{\pdual{(\pchoice{\pupdate{\pumod{v}{2}}}{\pupdate{\pumod{v}{\sndvel}}})}; \pevolve{\D{x}=v}}{x{>}0}
}
\limply {\ddiamond{\pdual{(\pchoice{\pupdate{\pumod{v}{2}}}{\pupdate{\pumod{v}{\sndvel}}})}; \pevolve{\D{x}=v}}{x^2{>}0}
}}
\]
Variable $x$ can be used in the postconditions despite being bound in the game.
Likewise, rule \irref{USR} can instantiate the above proof with
\(\sigma = \usubstlist{\usubstmod{j(\usarg)}{\usarg{>}{-}1}}\) to:%
\vspace{-0.5\baselineskip}%
\begin{sequentdeduction}[array]
\linfer[USR]
{\linfer[qear]
  {\lclose}
  {\lsequent{x>-1} {\lnot(\lnot\lexists{t{\geq}0}{x+2t>0} \lor \lnot\lexists{t{\geq}0}{x+(\sndvel)t>0})}}
}%
{\lsequent{x>-1} 
{\ddiamond{\pdual{(\pchoice{\pupdate{\pumod{v}{2}}}{\pupdate{\pumod{v}{\sndvel}}})}; \pevolve{\D{x}=v}}{x>0}
}}
\end{sequentdeduction}
\irref{USR} soundly instantiates the inference from premise to conclusion of the proof without having to change or repeat any part of the proof.
Uniform substitutions enable flexible but sound reasoning forwards, backwards, on proofs, or mixed \cite{DBLP:journals/jar/Platzer17}.
Without \irref{USR}, these features would complicate soundness-critical prover cores.

Since the axioms and axiomatic proof rules in \rref{fig:dGL} are themselves instances of axiom schemata and proof rule schemata that axiomatize \dGL \cite{DBLP:journals/tocl/Platzer15}, they are (even locally!) \emph{sound}.
Axiom \irref{DSd} stems from \dL \cite{DBLP:journals/jar/Platzer17} and is for solving constant differential equations.
Now that differentials are available, all differential axioms such as the Leibniz axiom
\(\der{f(\usall)\cdot g(\usall)} = \der{f(\usall)}\cdot g(\usall) + f(\usall)\cdot\der{g(\usall)}\)
and all other axioms for differential equations \cite{DBLP:journals/jar/Platzer17} can be added to \dGL.
Furthermore, hybrid games make it possible to equivalently replace differential equations with evolution domains by hybrid games without domain constraints \cite[Lem.\,3.4]{DBLP:journals/tocl/Platzer15}.

\newcommand{\reduct}[1]{#1^\flat}%
\newcommand{\LBase}{\textit{L}\xspace}%

The converse challenge for \emph{completeness} is to prove that uniform substitutions are flexible enough to prove all required instances of \dGL axioms and axiomatic proof rules.
\ifpredicationals
For simplicity, consider $p(\usall)$ to be a quantifier symbol of arity 0.
\fi%
  A \dGL formula $\phi$ is called \emph{surjective} iff rule \irref{US} can instantiate $\phi$ to any of its axiom schema instances, which are those formulas that are obtained by just replacing game symbols $a$ uniformly by any hybrid game etc. 
  \ifpredicationals
  and quantifier symbols $\contextapp{C}{}$ by formulas%
  \fi
  An axiomatic rule is called \emph{surjective} iff \irref{USR} can instantiate it to any of its proof rule schema instances.
The axiom \irref{testd} is surjective, as it does not have any bound variables, so its instances are admissible.
Similarly rules \irref{MP} and rule \irref{gena} become surjective \cite{DBLP:journals/jar/Platzer17}.
The proof of the following lemma transfers from prior work \cite[Lem.\,39]{DBLP:journals/jar/Platzer17}, since any hybrid game can be substituted for a game symbol.%
\begin{lemma}[Surjective axioms] \label{lem:surjectiveaxiom}
  If $\phi$ is a \dGL formula that is built only from 
  \ifpredicationals
  quantifier symbols of arity 0 and
  \fi%
  game symbols but no function or predicate symbols,
  then $\phi$ is surjective.
  Axiomatic rules consisting of surjective \dGL formulas are surjective.
\end{lemma}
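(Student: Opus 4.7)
The plan is, for any axiom schema instance $\phi'$ of $\phi$, to exhibit a single uniform substitution $\sigma$ that realizes the replacement from $\phi$ to $\phi'$ under rule \irref{US}. The first step is to read off from $\phi'$ the hybrid games $\alpha_a$ that replace each game symbol $a$ occurring in $\phi$, and define $\applysubst{\sigma}{a} \mdefequiv \alpha_a$, with $\sigma$ the identity on every other symbol. Since $\phi$ contains no function or predicate symbols, $\sigma$ introduces no replacements for function or predicate symbols either, so by \rref{def:usubst-admissible} one obtains the crucial identity $\freevarsdef{\sigma} = \emptyset$, because only function- and predicate-symbol substitutes contribute to $\freevarsdef{\sigma}$.

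The consequence is decisive: every $U$-admissibility side condition on $\sigma$ in \rref{fig:usubst} collapses to the trivial $\emptyset \cap U = \emptyset$, so $\sigma$ is vacuously $U$-admissible for every subexpression of $\phi$ under every constructor, including modalities, sequential composition, repetition, and continuous evolution. A structural induction on $\phi$ following \rref{fig:usubst} then confirms that $\applyusubst{\sigma}{\phi}$ is defined and equals $\phi'$: at each game-symbol leaf $\sigma$ performs exactly the intended replacement, and every compound constructor merely recurses without obstruction. Since $\phi$ is valid, \rref{thm:usubst-sound} turns $\sigma$ into a \irref{US} derivation of $\phi'$, establishing surjectivity of $\phi$.

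For the rules claim, consider an axiomatic rule whose premises $\phi_1, \dots, \phi_n$ and conclusion $\psi$ all satisfy the hypothesis of the lemma, and fix an arbitrary schema instance obtained by uniformly replacing the rule's game symbols. A single $\sigma$ built as above realizes these replacements simultaneously in every $\phi_i$ and in $\psi$ and still has $\freevarsdef{\sigma} = \emptyset$, so the hypothesis of \rref{thm:usubst-rule} is met and \irref{USR} produces the locally sound instantiated rule. The main obstacle, if any, lies in the routine book-keeping inside the structural induction of the first part, where \rref{fig:usubst} would ordinarily impose admissibility constraints at modalities and related binders; but precisely because $\freevarsdef{\sigma} = \emptyset$, these conditions hold vacuously and no variable-capture analysis is required, which is exactly what lets the proof from the \dL version transfer essentially unchanged.
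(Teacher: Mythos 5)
Your proposal is correct and matches the paper's argument: the paper proves this lemma by transfer from the \dL case \cite[Lem.\,39]{DBLP:journals/jar/Platzer17}, noting that any hybrid game can be substituted for a game symbol, and the substance of that transferred proof is exactly your observation that game-symbol substitutes do not contribute to $\freevarsdef{\sigma}$ in \rref{def:usubst-admissible}, so $\freevarsdef{\sigma}=\emptyset$ and every admissibility side condition in \rref{fig:usubst} holds vacuously. Your handling of the rules claim via $\freevarsdef{\sigma}=\emptyset$ and \rref{thm:usubst-rule} is likewise the intended argument.
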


\noindent
\ifpredicationals
\rref{lem:surjectiveaxiom} generalizes to quantifier symbols with arguments that have no function or predicate symbols, since those are always $\allvars$-admissible.
Generalizations to function and predicate symbol instances are possible with adequate care.
\fi%

\ifpredicationals
\else
Unfortunately, none of the axioms from \rref{fig:dGL} satisfy the assumptions of \rref{lem:surjectiveaxiom}.
While the argument from previous work would succeed \cite{DBLP:journals/jar/Platzer17}, the trick to simplify the proof is to consider $p(\usall)$ to be \(\ddiamond{c}{\ltrue}\) for some game symbol $c$.
Then any formula $\varphi$ can be instantiated for $p(\usall)$ alias \(\ddiamond{c}{\ltrue}\) by substituting the game symbol $c$ with the game \(\ptest{\varphi}\) and subsequently using the surjective axiom \irref{testd} to replace the resulting \(\ddiamond{\ptest{\varphi}}{\ltrue}\) by \(\varphi\land\ltrue\) or its equivalent \(\varphi\) as intended.
\fi
This makes axioms \irref{box+testd+choiced+composed+iterated+duald} and all axiomatic rules in \rref{fig:dGL} surjective.

With \rref{lem:surjectiveaxiom} to show that all schema instantiations required for completeness are provable by \irref{US+USR} from axioms or axiomatic rules, relative completeness of \dGL follows immediately from a previous schematic completeness result for \dGL \cite{DBLP:journals/tocl/Platzer15} and relative completeness of uniform substitution for \dL \cite{DBLP:journals/jar/Platzer17}.

\begin{theorem}[Relative completeness] \label{thm:dGL-complete}%
  The \dGL calculus is a \emph{sound and complete axiomatization} of hybrid games relative to \emph{any} differentially expressive logic\footnote{%
A logic \LBase closed under first-order connectives
is \emph{differentially expressive} (for \dGL) if every \dGL formula $\phi$ has an equivalent $\reduct{\phi}$ in \LBase and all differential equation equivalences of the form \(\ddiamond{\pevolve{\D{x}=\genDE{x}}}{G} \lbisubjunct \reduct{(\ddiamond{\pevolve{\D{x}=\genDE{x}}}{G})}\) for $G$ in \LBase are provable in its calculus.
} \LBase, i.e.,
  every valid \dGL formula is provable in \dGL from \LBase tautologies.
\end{theorem}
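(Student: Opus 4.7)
The plan is to reduce this to two prior results: the schematic completeness of \dGL from \cite{DBLP:journals/tocl/Platzer15} and the uniform substitution completeness argument for \dL from \cite{DBLP:journals/jar/Platzer17}. Soundness of rules \irref{US} and \irref{USR} is already established in \rref{thm:usubst-sound} and \rref{thm:usubst-rule}, so only completeness needs attention. The scheme of the argument is: any valid \dGL formula is provable in the schematic \dGL calculus from \LBase tautologies by \cite{DBLP:journals/tocl/Platzer15}, and every such schematic proof can be simulated in the uniform substitution calculus of \rref{fig:dGL} by applying \irref{US} to the concrete axioms and \irref{USR} to the concrete axiomatic rules.

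First I would verify that every schema instance required by the schematic completeness proof is in fact obtainable via \irref{US} or \irref{USR} from the concrete axioms/rules in \rref{fig:dGL}. By \rref{lem:surjectiveaxiom} together with the \irref{testd} trick used in the paragraph preceding the theorem (viewing $p(\usall)$ as \(\ddiamond{c}{\ltrue}\) for a fresh game symbol $c$, then substituting $c$ by $\ptest{\varphi}$ and rewriting with \irref{testd}), the axioms \irref{box+testd+choiced+composed+iterated+duald} and the axiomatic rules \irref{M+FP+MP+gena} are surjective. Consequently, for each schema instance needed by the schematic calculus, the corresponding concrete formula in \rref{fig:dGL} has that instance as an admissible \irref{US}-image, and analogously for rule schemata via \irref{USR}.

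Next I would handle differential equations. Axiom \irref{DSd} alone only covers constant right-hand sides, but the differential-expressiveness hypothesis on \LBase provides, for every formula \(\ddiamond{\pevolve{\D{x}=\genDE{x}}}{G}\) with $G$ in \LBase, a provably equivalent \LBase-formula \(\reduct{(\ddiamond{\pevolve{\D{x}=\genDE{x}}}{G})}\). Combined with the differential-form axioms mentioned right before the theorem (Leibniz and the other axioms of \cite{DBLP:journals/jar/Platzer17}, all of which transfer to \dGL because their uniform substitution soundness arguments only involve terms and tests), every \dGL formula reduces to an \LBase-formula whose validity is assumed to be derivable from \LBase tautologies. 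The reduction of hybrid game formulas to first-order reachability formulas carries over unchanged from the schematic completeness proof of \cite{DBLP:journals/tocl/Platzer15}, since each schematic step is now replaced by an \irref{US}/\irref{USR} instance.

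The main obstacle is ensuring the admissibility side conditions of \irref{US} never obstruct a schema instance that the schematic proof needs. This is exactly what \rref{lem:surjectiveaxiom} resolves for the game-symbol and arity-zero predicate cases; the \irref{testd} trick then lifts this to allow arbitrary formulas in predicate positions such as $p(\usall)$, including positions where bound variables of the ambient game would otherwise break admissibility. Once surjectivity is in place for each axiom and rule of \rref{fig:dGL}, completeness follows line-by-line from the schematic completeness of \cite{DBLP:journals/tocl/Platzer15}, exactly as the analogous reduction did for \dL in \cite{DBLP:journals/jar/Platzer17}.
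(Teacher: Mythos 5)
Your overall strategy coincides with the paper's: simulate the schematic completeness proof of \cite{DBLP:journals/tocl/Platzer15} inside the uniform substitution calculus by instantiating the concrete axioms of \rref{fig:dGL} with \irref{US} and the axiomatic rules with \irref{USR}, using \rref{lem:surjectiveaxiom} together with the \irref{testd} trick to establish surjectivity. But there is a genuine gap: the assignment axiom \irref{assignd} is \emph{not} surjective, and your proposal glosses over this. Your second paragraph claims that for each schema instance needed by the schematic calculus, the corresponding concrete formula of \rref{fig:dGL} has that instance as an admissible \irref{US}-image, and your final paragraph asserts that surjectivity is in place for \emph{each} axiom and rule of \rref{fig:dGL}; both claims are false for \irref{assignd} (and you even contradict your own correct list from the second paragraph, which rightly names only \irref{box+testd+choiced+composed+iterated+duald} and the rules). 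Concretely, when the term substituted for $f$ shares variables that are bound inside the formula substituted for $p(\usarg)$, the substitution into the left-hand side $p(f)$ of \irref{assignd} violates admissibility, so no admissible \irref{US} instance of \irref{assignd} produces the needed equivalence for such postconditions, even though the completeness argument requires every valid \(F \limply \ddiamond{\pupdate{\pumod{x}{\theta}}}{G}\) to be provable. The paper isolates exactly this: except for assignments, the surjective axioms and rules cover all uses in the schematic completeness theorem \cite[Thm.\,4.5]{DBLP:journals/tocl/Platzer15}; the single remaining case \(\entails F \limply \ddiamond{\pupdate{\pumod{x}{\theta}}}{G}\) is patched by converting to box form via axiom \irref{box} and invoking the corresponding case of the \dL completeness proof \cite[Thm.\,40]{DBLP:journals/jar/Platzer17}, which derives such formulas by \irref{US} from the (box dual of the) assignment axiom through a different route. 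Your closing appeal to ``the analogous reduction for \dL'' gestures in this direction but never supplies the patch, so your simulation of the schematic proof breaks down precisely at assignments.

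A secondary remark: your paragraph on differential equations is mostly beside the point. In the relative completeness argument, reasoning about differential equations is delegated wholesale to the oracle logic \LBase by the differential-expressiveness hypothesis, so neither \irref{DSd} nor Leibniz-style differential axioms play any role in the completeness direction; including them is harmless but adds no content, whereas the effort should have gone to the assignment case that actually requires attention.
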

\begin{proofatend}
\let\Oracle\LBase%
\newcommand{\precondf}{F}%
\newcommand{\postcondf}{G}%
The axioms and axiomatic proof rules in \rref{fig:dGL} are concrete instances of sound schemata or rules from prior work \cite{DBLP:journals/tocl/Platzer15,DBLP:journals/jar/Platzer17}.
By \rref{lem:surjectiveaxiom} the axioms \irref{box+testd+choiced+composed+iterated+duald} and all axiomatic rules in \rref{fig:dGL} are surjective, so can be instantiated by rule \irref{US} to any of their schema instances.
Except for assignments, these cover all axioms and proof rules used in the relative completeness theorem for \dGL's schematic axiomatization \cite[Thm.\,4.5]{DBLP:journals/tocl/Platzer15}.
Thus, \rref{lem:surjectiveaxiom} makes the previous completeness proof transfer to the axiomatic proof calculus of differential-form \dGL,
but only if all uses of the assignment axiom, which is not surjective, can be patched.
The only such case is in the proof that
\m{\entails \precondf \limply \ddiamond{\pupdate{\pumod{x}{\theta}}}{\postcondf}}
implies that this formula can be proved in the \dGL calculus from \Oracle.
Since \(\ddiamond{\pupdate{\pumod{x}{\theta}}}{\postcondf}\)
is equivalent to
\(\dbox{\pupdate{\pumod{x}{\theta}}}{\postcondf}\) via axiom \irref{box},
this follows from the corresponding case in the completeness proof for \dL \cite[Thm.\,40]{DBLP:journals/jar/Platzer17} that
\m{\entails \precondf \limply \dbox{\pupdate{\pumod{x}{\theta}}}{\postcondf}}
implies that this formula is provable by rule \irref{US} from the \irref{box} dual of assignment axiom \irref{assignd}.
\qedhere
\end{proofatend}

\section{Related Work}

Since the primary impact of uniform substitution is on conceptual simplicity and a significantly simpler prover implementation, this related work discussion focuses on hybrid games theorem proving.
A broader discussion of both hybrid games and uniform substitution themselves is provided in the literature \cite{DBLP:journals/tocl/Platzer15,DBLP:journals/jar/Platzer17}.
The approach presented here also helps discrete game logic \cite{DBLP:conf/focs/Parikh83}, but that is only challenging after a suitable generalization beyond the propositional case.

Prior approaches to hybrid games theorem proving are either based on differential game logic \cite{DBLP:journals/tocl/Platzer15,DBLP:journals/tocl/Platzer17} or on an exterior game embedding of differential dynamic logic \cite{DBLP:conf/cade/QueselP12}.
This paper is based on prior findings on differential game logic \cite{DBLP:journals/tocl/Platzer15} that it complements by giving an \emph{explicit construction} for uniform substitution.
This enables a purely axiomatic version of \dGL that does not need the axiom schemata or proof rule schemata from previous approaches \cite{DBLP:journals/tocl/Platzer15,DBLP:journals/tocl/Platzer17}.
This change makes it substantially simpler to implement \dGL soundly in a theorem prover.
The exterior game embedding of differential dynamic logic \cite{DBLP:conf/cade/QueselP12} was implemented with proof rule schemata in \KeYmaera and was, thus, significantly more complex.

The primary and significant challenge of this paper compared to previous uniform substitution approaches \cite{Church_1956,DBLP:journals/jar/Platzer17,DBLP:conf/cpp/BohrerRVVP17} arose from the semantics of hybrid games, which need a significantly different set-valued winning region style.
The root-cause is that, unlike the normal modal logic \dL, \dGL is a subregular modal logic \cite{DBLP:journals/tocl/Platzer15}.
Especially, Kripke's axiom \(\dbox{\alpha}{(\phi\limply\psi)} \limply (\dbox{\alpha}{\phi} \limply \dbox{\alpha}{\psi})\) is unsound for \dGL.

\section{Conclusion and Future Work}

This paper provides an explicit construction of uniform substitutions and proves it sound for differential game logic.
It also indicates that uniform substitutions are flexible when a logic is changed.
The modularity principles of uniform substitution hold what they promise, making an implementation in a theorem prover exceedingly straightforward.
The biggest challenge was the semantic generalization of the soundness proofs to the subtle interactions caused by hybrid games.

In future work it could be interesting to devise a framework for the general construction of uniform substitutions for arbitrary logics from a certain family.
The challenge is that such an approach partially goes against the spirit of uniform substitution, which is built for flexibility (straightforward and easy to change), not necessarily generality (already preequipped to reconfigure for all possible future changes).
Such generality seems to require a schematic understanding, possibly self-defeating for the simplicity advantages of uniform substitutions.

\section*{Acknowledgment}
I thank Brandon Bohrer and Yong Kiam Tan for their helpful feedback.
This material is based upon work supported by the National Science Foundation under NSF CAREER Award CNS-1054246.

Any opinions, findings, and conclusions or recommendations expressed in this publication are those of the author(s) and do not necessarily reflect the views of the National Science Foundation.

\bibliographystyle{plainurl}
\bibliography{dGL-usubst}

\ifkeepproof
\else
\iflongversion
\clearpage
\appendix
\section{Proofs} \label{app:proofs}
\printproofs
\fi
\fi
\end{document}